\newcommand{\tensormax}{\otimes_\mathrm{max}}
\newcommand{\gtensormax}{\overline{\otimes}_\mathrm{max}}
\newcommand{\tensormin}{\otimes_\mathrm{min}}
\newcommand{\Tr}{\mathrm{Tr}}
\newcommand{\id}{\mathds{1}}
\newtheorem{thm}{Theorem}
\newtheorem{prop}[thm]{Proposition}
\theoremstyle{remark}
\theoremstyle{definition}
\newtheorem{dfn}[thm]{Definition}
\definecolor{structure}{rgb}{0.2,0.3,0.7}
\begin{document}

\title{Generalized Probabilistic Theories Without the No-Restriction Hypothesis}

\author{Peter Janotta}
\affiliation{Universit\"{a}t W\"{u}rzburg, Am Hubland, Fakult{\"{a}}t f\"{u}r Physik und Astronomie, 97074 W\"{u}rzburg, Germany}
\author{Raymond Lal}
\affiliation{University of Oxford, Department of Computer Science, Quantum Group, Wolfson Building, Parks Road, Oxford OX1 3QD, UK.}

\pacs{03.65.-w}
\keywords{Foundations of quantum theory \and generalized probabilistic theories \and tensor products}

\begin{abstract}
The framework of generalized probabilistic theories (GPTs) is a popular approach for studying the physical foundations of quantum theory.
 The standard framework assumes the no-restriction hypothesis, in which the state space of a physical theory determines the set of measurements. 
 However, this assumption is not physically motivated. 
 We generalize the framework to account for systems that do not obey the no-restriction hypothesis. 
 We then show how our framework can be used to describe new classes of probabilistic theories, for example those which include intrinsic noise. 
Relaxing the restriction hypothesis also allows us to introduce a `self-dualization' procedure, which yields a new class of theories that share many features of quantum theory, such as obeying Tsirelson's bound for the maximally entangled state.
 We then characterize joint states, generalizing the maximal tensor product.
 We show how this new tensor product can be used to describe the convex closure of the Spekkens toy theory, and in doing so we obtain an analysis of why it is local in terms of the geometry of its state space.
We show that the unrestricted version of the Spekkens toy theory is the theory known as `boxworld' that allows maximal nonlocal correlations.
\end{abstract}

\maketitle


\section{Introduction}\label{sec:intro}

The framework of generalized probabilistic theories (GPTs) is a modern operational approach for studying the physical foundations of quantum theory \cite{Barrett07}.
The framework is \em operational \em because a theory is defined according to the observable measurement statistics that it predicts.
In contrast, quantum theory is usually defined using an abstract mathematical formalism without physical motivation (e.g. the density matrix formalism).
Assuming only basic principles, the framework encompasses a large variety of theories. 
For example, quantum theory and classical probability theory are special cases of GPTs. 
The focus of work on GPTs is to identify the unique physical properties that distinguish quantum theory from other theories.
More generally, one can examine the relationship between different physical properties, such as no-cloning and nonlocality, without restricting to a particular physical theory.

Using this framework, it has been shown that many properties that were thought to be particular to quantum theory are in fact very general.
As a sample of such results, it was shown that any non-classical probability theory (in the sense to be described in section \ref{sec:GPTs}) has the following properties:  the existence of entanglement \cite{Barrett07}; for mixed states, the lack a unique decomposition into a unique ensemble of pure states; generalizations of the no-cloning or no-broadcasting theorem \cite{Barnum08}; and, an information-disturbance trade-off \cite{Scarani06}. 
Notably, recent attempts to reconstruct quantum theory from physical axioms include the assumptions made in GPTs \cite{Masanes11} or very similar assumptions \cite{Hardy11,Chiribella11}.

A GPT is defined by a set of preparations, a set of measurements, and composition rules for multipartite systems called the tensor product of the theory.
In general there is a trade-off between possible preparations and possible measurement outcomes: the larger the set of preparations, the smaller the upper bound on the set of allowed measurements \cite{Short2010}.
In the existing GPT framework, it is usually assumed that this upper bound is \em saturated\em.
This means that, for a chosen set of states, \em all \em potential measurement outcomes that yield probability-valued results are assumed to be physically realizable.
This is called the \em no-restriction hypothesis \em \cite{Chiribella11}.
This assumption is not based on any physical motivation, and it is usually assumed for the sake of mathematical convenience. 
In this work we take on the task of extending the framework of GPTs when the no-restriction hypothesis is abandoned.
This extension of GPTs therefore brings the framework closer to the operational motivation for which it was originally initiated.

\paragraph*{Our contribution.}
The idea of removing the no-restriction hypothesis (or replacing it with other assumptions) has appeared sporadically in other works \cite{Chiribella11,barnum10}.
However, until now  a systematic analysis of the consequences of doing so has been lacking. 
In this paper we provide a well-defined framework with the no-restriction hypothesis omitted, whilst keeping the other assumptions of the GPT framework.
Our work then proceeds in two parts.

In the first part we show that this new framework encompasses more theories than before. 
For example, we show that theories with intrinsic noise can be described in our framework, but not in the existing GPT framework.
We also provide a procedure for constructing a \em self-dual \em theory from a theory which is not self-dual.
The importance of this is that self-duality has been shown to imply `quantum-like' (for example, limiting bipartite nonlocality to Tsirelson's bound for the maximally entangled state \cite{polypaper}).
Hence this allows us to introduce a new class of probabilistic theories with `quantum-like' behaviour, and crucially, this is a class of theories which does not satisfy the no-restriction hypothesis.

In the second part, we develop the treatment of composite systems.
In particular, we show that our extension requires a  new (and more general) definition of the tensor product for describing composite systems. 
This significantly extends the GPT framework, since it allows us to analyse the relationship between nonlocality and the geometry of the state space of a theory, building on previous work in this direction.
For example, we show how  the Spekkens toy theory (for which the connection to GPTs had not been previously established) can be viewed as a GPT, but only in our more general framework.
Moreover, this allows us to give an analysis of why the Spekkens theory is local, using the geometry of its state space.

\paragraph*{Structure of the paper.}
In section \ref{sec:GPTs} we give a brief overview of the framework of GPTs.
We then begin the first part of our analysis, concentrating on single systems.
In section \ref{sec:norestriction} we describe in detail the no-restriction hypothesis, and some consequences of relaxing this assumption. 
In section \ref{sec:noisyboxworld} we develop the important example of theories with noise.
In section \ref{sec:polymodel} we introduce the self-dualization procedure, and discuss the class of theories that this introduces.
We then enter the second part of our analysis, which concerns composite systems.
In section \ref{sec:jointstates} we explain how joint states of composite systems are usually described.
In section \ref{sec:genmaxtensor} we show why a new definition of composite systems is needed, and we introduce this definition. 
We then study examples of theories such as the Spekkens model.

\section{Generalized probabilistic theories: a brief summary}\label{sec:GPTs}

A physical experiment consists of the following steps: the preparation of a system, transformations of that system (e.g. by inherent dynamics), and a measurement.
In general, the measurement will different outcomes, each occurring with some probability.
Defining a generalized probabilistic theory amounts to specifying these probabilities for any such combination of preparation, transformation and measurement.
Note that transformations can be absorbed into either the preparation or the measurement. 
Hence to define the allowed probability distributions of a GPT, it suffices to define the set of preparation procedures and the set of measurements.

\subsubsection{States and effects}

Consider a class of preparation procedures which all yield exactly the same measurement statistics.
The members of this class are experimentally indistinguishable.
Since a GPT concerns only experimental statistics, we can define a \em  state \em of a system as such an equivalence class.
Analogously we also define an \em effect \em as an equivalence class of measurement outcomes. 
We will refer to this identification of states and effects with their respective measurement statistics as the \em equivalence principle\em.
Mathematically, states are represented by elements of a vector space $V$. 
Effects are linear functionals on states, i.e. elements of the dual space $V^*$.
Applying an effect $e$ to a state $\omega$ yields the probability $p(e|\omega) = e(\omega)$ for the corresponding measurement outcome to occur when measuring the system in the state.
Without loss of generality we will choose a specific representation of states and effects in this paper to demonstrate the abstract concepts.
Both states and effects will be represented by vectors embedded in $\mathbb{R}^n$.
The application of effects on states is given by the Euclidean inner product of the respective vectors:
\begin{align}
 \label{eq:vecrep}
 e = \left( \epsilon_1, \cdots , \epsilon_n\right)^T \qquad \omega = \left( w_1 , \cdots , w_n\right)^T\\
 p(e|\omega) = e^T \!\!\!\cdot \omega = \sum_i \epsilon_i \, w_i
\end{align}

The GPT framework also accounts for ensembles of preparations or measurements, in which there is uncertainty about which measurement is implemented, or which state has been prepared.
This could occur if there is a probabilistic selection of the preparation procedure, for example.
This probability distribution is represented by using \em mixed states \em and \em mixed effects\em, given by convex combinations:
\begin{align}
 e &= \sum_i \lambda_i \, e_i \qquad \lambda_i \geq 0, \, \sum_i \lambda_i = 1\\
 \omega &= \sum_i \mu_j \, \omega_j \qquad \mu_i \geq 0, \, \sum_i \mu_i = 1
\end{align}
corresponding to ensembles $\{\lambda_i, e_i\}$ and $\{\mu_i, \omega_i\}$.
Consequently, states and effects form convex sets. 
If the only convex decomposition of a state $\omega$ is such that $\omega\propto \omega_i$ for all $i$, then the state is a \em pure state\em.
Similarly, if the only convex decomposition of an effect $e$ using Eq.~\ref{eq:vecrep}  is such that $e \propto e_i$ for all $i$, then the effect is a \em pure effect\em.

Since effects and states act linearly on each other, the probability distribution for the ensembles is the weighted sum of the probabilities $p_{ij}=e_i(\omega_j)$ of individual ensemble elements:
\begin{align}
 e(\omega) = \sum_{i,j} \lambda_i \, \mu_j \, e_i(\omega_j).
\end{align}
More generally, consider the result of applying different measurements to systems prepared by the same method. 
In general, there will be measurement outcomes with probabilities that are linearly dependent for a fixed state.
Analogously, one might find linear dependencies between the probabilities for a fixed measurement outcome under variations of the state that is prepared.
This implies a linear dependence between the vector space elements $\omega\in V$ representing the states; there is a corresponding linear dependence for the effects $e\in V^*$. 
This determines the \em dimension \em of $V$ as the minimal number of different measurement outcomes needed to identify a state uniquely (this is called the `fiducial set' of measurement outcomes by Hardy \cite{Hardy}). 
In this paper we restrict ourselves to systems for which the vector space $V$ has finite dimension.
Hence the dimension of $V$ is equal to the dimension of the dual space $V^*$, which is the minimal number of preparations required to identify an effect.

\subsubsection{Normalization and measurements}
A central concept in the GPT framework is the description of \textit{perfect} preparations and measurements.
A perfect preparation is one that is guaranteed to succeed.
It is represented by a normalized state, where normalization defined with respect to a special effect, called the \em unit measure \em $u$.
The set of all normalized states is called the \em state space \em $\Omega$. 
The unit measure $u$ represents an unbiased measurement with only one outcome: this outcome occurs if a preparation has succeeded, i.e. it is determined by
\begin{align}
 u(\omega) = 1 \quad \forall \omega \in \Omega.
\end{align}
In the specific representation used in this paper we choose 
\[
u := \left( 0, \cdots ,0,1 \right)^T.
\]
Consequently, for a state $\omega$ embedded in an $n$-dimensional vector space $V$, the normalization of $\omega$ is directly apparent from the last component $\omega_n$, i.e. normalized states have $\omega_n = 1$. 

An effect is a map $e:\Omega\rightarrow [0,1]$ that gives a probability when applied to a normalized state $\omega$. 
A perfect measurement consists of a set of effects $\{e_i\}$ which sum up to the unit measure, i.e.:
\[
\sum_i e_i = u.
\]
Thus, measurement probabilities sum up to one for any perfectly-prepared system. 

Beyond the description of perfect preparations and measurements, the GPT framework also accounts for the opposite extreme, namely preparations that always fail or measurement outcomes that never occur no matter which state they are applied to.
The corresponding states and effects are given by the zero elements $\emptyset$ of $V$ and $V^*$ with
\begin{align}
 \emptyset(\omega) = 0 \quad &\forall \omega \in V\\
 e(\emptyset) = 0 \quad &\forall e \in V^*.
\end{align}
Imperfections in preparations yield unnormalized states resulting from the mixture of a normalized state $\omega$ and $\emptyset$. 
Detector deficiencies and bias can be addressed by mixing every effect of a perfect measurement with $\emptyset$ or another common effect.
However, we will show in section \ref{sec:linmaps} that consistency conditions on joint states forbid imperfect measurements.
Consequently, the measurement has to be completed by an additional effect, such that the effects sum up to the unit measure, even though the occurrence of this additional measurement outcome cannot be registered by an experimenter due to detector deficiencies.

\subsubsection{Equivalent Representations}
\label{sec:equivreps}
Consider applying arbitrary bijective linear maps $L^T$ on all effects and the corresponding inverse map $L^{-1}$ on all states. 
This leaves the results from any combination of effects and states invariant, since:
\begin{align}
\label{eq:equivalentsets}
\left(L^T \!\!\!\cdot e\right)\!\!\left[L^{-1} \!\!\!\cdot\omega\right] &= \left(L^T \!\!\!\cdot e\right)^T \!\!\!\cdot L^{-1} \!\!\!\cdot\omega = e^T \!\!\!\cdot L \cdot L^{-1} \!\!\!\cdot\omega = e^T \!\!\!\cdot \omega.
\end{align}
Now, a particular probabilistic theory is associated with a particular state space $\Omega$ and set of effects $E$.
But theories are distinguished only by the different measurement statistics that are possible (as is guaranteed by using the equivalence principle).
Hence if $\Omega$ and $E$ are transformed according to \eqref{eq:equivalentsets}, then the resulting $\Omega'$ and $E'$  define the same theory, since this transformed state space and effect set yield the same measurement statistics.

\begin{figure*}[t]
 \centering
  \begin{tikzpicture}
   \node (statespace){
    \begin{tikzpicture}
     \path (3.6-1.2, -1.2) coordinate (origin);
     \begin{scope}
      \clip (origin) -- +(60:2.4) arc (60:120:2.4) -- cycle;
      \shade[inner color=structure!50!white, outer color=white] (origin) circle (2.4);
     \end{scope}     
     \node at (3.6-1.6,0.2) {$V_+$};
     \draw[->, thick, gray] (origin) -> (3.6+.3,-1.2);
     \draw[->, thick, gray] (origin) -> (3.6-1.2,1.2);
     \draw[line width=1.5pt, structure!50!white] (3.6-1.2-1.03, 0.6)--(3.6-1.2+1.03, 0.6);  
     \draw[->, line width=1.3pt, red!80] (origin) -> (3.6-1.2,0.6) node[right, midway, xshift=-1mm]{$u$};
     \draw[black!60] (3.6-1.2,0.6) +(270:0.4cm) arc (270:360:0.4cm) -- (3.6-1.2,0.6);
     \draw[fill, black!60] (2.4, 0.6) +(315:0.23cm) circle (1.2pt);
     \draw (origin)--+(60:2.7) (origin) -- +(120:2.7);
     \node at (0.9,-0.3) {
      \begin{tikzpicture}
       \draw[fill=structure!50!white] (-0.5,-0.5) [rounded corners=10pt] -- (-0.3,0.5) -- (0.7,0.5)
           [sharp corners] -- (0.7,-0.2) -- (0.6,-0.5) -- (0.3,-0.7)
           [rounded corners=5pt] -- cycle;
       \node at (0.15,-0.1) {$\Omega$};
      \end{tikzpicture}
     };
     \draw[thick, ->] (0.9,0.1) to[out=90] (3.6-1.2-1.03+0.5, 0.6);
    \end{tikzpicture}
   } ;

   \node (LogicalArrow) [scale=3*.9, right of=statespace, node distance=1cm] {$\Leftrightarrow$};

   \node (effectspace) [scale=.9,right of=LogicalArrow, node distance=3cm] {
    \begin{tikzpicture}
     \path (-1.2, -1.2) coordinate (origin);
    \draw[line width=2pt, red, ->](origin)--(-1.2, 0.6);
     \begin{scope}
      \clip (origin)--(1.8-1.2, 0.577*1.8-1.2)--(0.6,1.2)--(-3,1.2)--(-3, 0.577*1.8-1.2)--cycle;
      \shade[inner color=red!50, outer color=white] (origin) circle (2.4); 
     \end{scope}
     \shade[bottom color=gray!50!red, top color=black!5] (origin)--+(60:2.78)--+(120:2.78)--cycle;
     \draw[gray, dashed] (origin) -- +(60:2.7) (origin) -- +(120:2.7);
     \draw[->, thick, gray] (origin) -> (0.4,-1.2);
     \draw[->, thick, gray] (origin) -> (-1.2,1.2);
     \draw[line width=1.4pt, gray!50] (-1.2-1.03, 0.6)--(-1.2+1.03, 0.6);  
     \draw[red, opacity=0.5, fill=red!50] (origin)--++(1.57, 0.9)--++(-1.57, 0.9)--++(-1.57,-0.9)--cycle;
     \draw[black!70] (origin) +(30:0.4cm) arc (30:120:0.4cm) -- (origin);
     \draw[fill, black!70] (origin) +(75:0.2cm) circle (1.25pt);
     \draw (origin) -- +(30:2) (origin) -- +(150:2);
     \draw [dotted](origin) ++(30:2)--++(30:.2) (origin) ++(150:2)--++(150:.2);
     \node at (-1.2, -0.3) {$E$};
     \node at (-2.6,0.4) {$V_+^*$};
    \end{tikzpicture}
   };
   \node (alternative) [scale=2*.9, right of=effectspace, node distance=1.5cm] {vs};

   \node (effectspace2) [scale=.9, right of=alternative, node distance=3cm] {
    \begin{tikzpicture}
     \path (-1.2, -1.2) coordinate (origin);
    \draw[line width=2pt, red, ->](origin)--(-1.2, 0.6);
     \begin{scope}
      \clip (origin)--(1.8-1.2, 0.577*1.8-1.2)--(0.6,1.2)--(-3,1.2)--(-3, 0.577*1.8-1.2)--cycle;
      \shade[inner color=red!50, outer color=white] (origin) circle (2.4); 
     \end{scope}
     \shade[bottom color=gray!50!red, top color=black!5] (origin)--+(60:2.78)--+(120:2.78)--cycle;
     \draw[gray, dashed] (origin) -- +(60:2.7) (origin) -- +(120:2.7);
     \draw[->, thick, gray] (origin) -> (0.4,-1.2);
     \draw[->, thick, gray] (origin) -> (-1.2,1.2);
     \draw[line width=1.4pt, gray!50] (-1.2-1.03, 0.6)--(-1.2+1.03, 0.6);  
     \draw[white, opacity=0.25, fill=white!50] (origin)--++(1.57, 0.9)--++(-1.57, 0.9)--++(-1.57,-0.9)--cycle;
     \draw[red, opacity=0.5, fill=red!50] (origin)--++(.8, 1.1)--++(-.8, 0.7)--++(-.8,-1.1)--cycle;
     \draw[black!70] (origin) +(30:0.4cm) arc (30:120:0.4cm) -- (origin);
     \draw[fill, black!70] (origin) +(75:0.2cm) circle (1.25pt);
     \draw (origin) -- +(30:2) (origin) -- +(150:2);
     \draw [dotted](origin) ++(30:2)--++(30:.2) (origin) ++(150:2)--++(150:.2);
     \node at (-1.2, -0.3) {$E$};
     \node at (-2.6,0.4) {$V_+^*$};
    \end{tikzpicture}
   };
  \end{tikzpicture}
\caption{The construction of the effect set $E$ in the traditional GPT framework with no-restriction hypothesis is shown in the middle. Without the no-restriction hypothesis the definition of the effect set gets a independent part of the theory specification (right picture).}
\end{figure*}
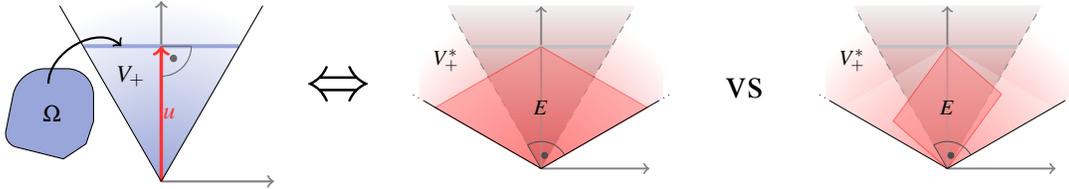

\subsubsection{Examples}
\paragraph*{Quantum theory.}
Consider the usual quantum formalism, for which a state is given by a density matrix $\rho$ on a Hilbert space $\mathcal{H}$.
By decomposing density matrices in an operator basis, we obtain the real vector space $V$ defined above for quantum theory.
For example, there is a well-known representation of the normalized states of a qubit as a linear combination of the Pauli-operators $\sigma_i$:
\begin{align}
 \rho = \frac{1}{2} \, (\id + a \, \sigma_x + b \, \sigma_y + c \, \sigma_z) \qquad a^2+b^2+c^2 \geq 1
\end{align} 
Forming a real vector from the coefficient $a$, $b$, $c$ gives the representation of the qubit state space in $V=\mathbb{R}^3$: this is the Bloch ball. 

Adding a fourth component that indicates normalization gives a representation similar to \eqref{eq:vecrep}.
However, for quantum systems of higher dimension the characterization of the geometrical shape of the state spaces in this representation is still an open problem \cite{Bengtsson11}.

In the usual density matrix representation an effect is a POVM element $E$, which is applied via the trace rule, 
so that the probability of an effect $E$ given the state $\rho$ is given by $\mathrm{Tr}[ E\circ \rho]$.
The unit measure $u$ is given by the identity operator $\mathds{1}$ on $\mathcal{H}$, so that a density matrix $\rho$ is normalized when:
\[
\mathrm{Tr}[ \mathds{1}\circ\rho]=1.
\]
Note that for quantum systems the set of states and the set of effects can be \em identified\em: this is the set of positive operators on $\mathcal{H}$. 
For example, for a qubit the Bloch ball represents both (normalised) states and effects. 
This is an example of `self-duality' in a theory; we shall discuss this further in section \ref{sec:polymodel}.

\paragraph*{Classical probability theory.}  
The state space of a classical system in $\mathbb{R}^d$ is a simplex. 
This is the convex hull of $d+1$ pure states (which can be characterized via a condition on linear independence).
For example, for $d=1$, the classical state space is a geometrically line, which represents a bit.
The extreme points of the line $\omega_0$ and $\omega_1$ are the pure states: these represent the values $0$ or $1$ of the bit  respectively. 
The convex mixtures $p\omega_0+(1-p)\omega_1$ represent states of classical uncertainty about the value of the bit.
Only one measurement outcome is needed to identify the state, e.g.~the probability of obtaining the $0$ value for the bit.
For $d=2$, the simplex is a triangle in $\mathbb{R}^2$, which represents a trit; and so on.
As for a bit, for any $d$ the pure states $\omega_i$ represent mutually exclusive properties of the system. 
For example, if one knows with certainty which number is on top of a die, then one automatically knows that none of the other numbers is on top.
This means that the pure effects then correspond to measurement outcomes that perfectly distinguish $\omega_i$ 
i.e. $e_i(\omega_j) = \delta_{ij}$.

\paragraph*{Boxworld.} This is a popular toy theory in the GPT framework that is neither quantum nor classical, which was first introduced systematically in \cite{Barrett07}.
Boxworld consists of a  class of single systems characterized by the dimension $d \geq 2$ of the state space.
For $d=2$ the normalized state space $\Omega$ is the convex hull of the following pure states:
\begin{align}
 \label{eq:gbitstates}
 \omega_1 = (1,0,1)^T \quad &\omega_2 = (0,1,1)^T \\
 \omega_3 = (-1,0,1)^T \quad &\omega_4 = (0,-1,1)^T,
\end{align}
and so geometrically $\Omega$ is a square.
The set of effects is given by the convex hull of of $\emptyset = (0,0,0)^T$, $u=(0,0,1)^T$  and the following extremal effects:
\begin{align}
 \label{eq:unrestrictedgbiteffects}
 e_1 = \frac{1}{2} \, (1,1,1)^T \quad &e_2 = \frac{1}{2} \, (-1,1,1)^T \\
 e_3 = \frac{1}{2} \, (-1,-1,1)^T \quad &e_4 = \frac{1}{2} \, (1,-1,1)^T
\end{align}
It is straightforward to show that the measurement statistics of the two orthogonal binary measurements $M_1=\{e_1,e_3\}$ and $M_2=\{e_2,e_4\}$ give enough information to identify any state.
Indeed, due to the normalization constraint the measurement statistics of the binary measurements on normalized states is determined by the probabilities $p_1$, $p_2$ for the first outcomes $e_1$, $e_2$. 
The different states give rise to the full range $(p_1,p_2) \in [0,1]^2$ of possible probability distributions, with the probabilities $p_1$ and $p_2$ being independent.
Hence the measurement outcomes $e_1$ and $e_2$ are enough to identify the state of the system, which verifies that the dimension is $d=2$.
Note that unlike orthogonal measurements in quantum theory (such as $\sigma_x$ and $\sigma_y$), there is no uncertainty principle for $M_1$ and $M_2$ for this system \cite{steeg}.
For example, although $e_1$ and $e_4$ belong to orthogonal measurements, we have $e_1(\omega_1)=e_4(\omega_1)=1$. 

Higher dimensional single systems with $d > 2$ in boxworld have $d$ different binary orthogonal measurements and state spaces given by hypercubes.
For the joint states that we shall discuss in section \ref{sec:jointstates}, boxworld allows maximal nonlocal correlations (using the CHSH inequality introduced below). 
These correlations define the \em Popescu-Rohrlich box \em \cite{PRbox}, and they are not realizable by quantum theory.

\section{The no-restriction hypothesis}
\label{sec:norestriction}
We now consider in detail the no-restriction hypothesis, and the consequences of relaxing it.

\subsection{Defining the set of effects}

Effects are restricted to give values in the range of $[0,1]$ when applied to normalized states.
But in the traditional framework of GPTs, the set of effects $E$ is not restricted any further.
That is, the set of effects is exactly the set of all probability-valued linear functionals on the given states.
We will call this relationship between states and effects the \em no-restriction hypothesis\em, in accordance with \cite{Chiribella11}. 
It is satisfied for classical probability theory and quantum theory.

\begin{thm}
 The set of effects under the no-restriction hypothesis is given by
\begin{align}
\label{eq:fulleffectset}
E := V^*_+ \cap (u - V^*_+)
\end{align}
 with the so-called dual cone
\begin{align}
 \label{eq:dualcone}
 V^*_+ &:= \left\{ e \in V^* \, \middle| \, e(\omega) \geq 0 \quad \forall \omega \in \Omega \right\}.
\end{align}
\end{thm}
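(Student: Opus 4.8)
The plan is to unwind both sides of \eqref{eq:fulleffectset} to the same elementary condition, namely that $e(\omega)\in[0,1]$ for every normalized state $\omega\in\Omega$, which is precisely the defining property of an effect under the no-restriction hypothesis. So the statement follows once I show that $e\in V^*_+\cap(u-V^*_+)$ holds if and only if $0\le e(\omega)\le 1$ for all $\omega\in\Omega$, and then invoke the no-restriction hypothesis to identify this set with $E$.

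First I would dispatch the lower bound. By the definition \eqref{eq:dualcone} of the dual cone, membership $e\in V^*_+$ is by construction equivalent to $e(\omega)\ge 0$ for all $\omega\in\Omega$; there is nothing to prove beyond recalling the definition. (If one instead prefers to read $V^*_+$ as the dual of the full cone of unnormalized states, the two notions coincide by linearity, since every unnormalized state is a nonnegative multiple of an element of $\Omega$ together with the zero state $\emptyset$.)

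Next I would treat the upper bound, claiming that $e\in u-V^*_+$ is equivalent to $e(\omega)\le 1$ for all $\omega\in\Omega$. Indeed, $e\in u-V^*_+$ means $u-e\in V^*_+$, which by the previous step says $(u-e)(\omega)\ge 0$ for all $\omega\in\Omega$; using linearity of effects and the defining property $u(\omega)=1$ of the unit measure on normalized states, this reads $1-e(\omega)\ge 0$, i.e.\ $e(\omega)\le 1$.

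Intersecting the two characterizations yields exactly $\{e\in V^*\mid 0\le e(\omega)\le 1\ \text{for all}\ \omega\in\Omega\}$, the set of all probability-valued linear functionals on the state space, which is $E$ by the no-restriction hypothesis, completing the argument. The only point requiring a moment's care is the passage between ``valued in $[0,1]$ on $\Omega$'' and the two cone conditions, which hinges on $u$ being normalized to $1$ exactly on $\Omega$ and on the linearity of effects; I do not expect any genuine obstacle, so the proof should be short.
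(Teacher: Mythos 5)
Your proposal is correct and follows essentially the same route as the paper: both decompose the probability-valued condition into the non-negativity constraint (membership in the dual cone $V^*_+$) and the upper-bound constraint (membership in $u - V^*_+$ via $u - e \in V^*_+$), and then invoke the no-restriction hypothesis to identify the intersection with $E$. Your version is slightly more explicit about the role of $u(\omega)=1$ in the upper-bound step, but there is no substantive difference.
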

 
\begin{proof}
The definition of effects as probability-valued linear functionals can be decomposed into two conditions. 

The first condition is that effects have to give non-negative results on every element of the state space.
For arbitrary elements $e \in V^*$ satisfying this condition, the condition is also satisfied by the positive ray $\left\{ \lambda \, e \middle| \lambda \geq 0 \right\}$.
Hence, the set obeying the non-negativity condition is a cone, namely the \em dual cone \em $V^*_+$, defined by \eqref{eq:dualcone}.

The second condition on effects requires them to give results not larger than one, when applied to arbitrary normalized states.
In other words the results have to be one or one minus a positive value, i.e. $e \in u - V^*_+$.

In the standard framework both boundary conditions are saturated.
That is, for a given state space, \em any \em linear functional that gives probability-valued results for all normalized states is included in the theory.
Thus, the set of effects $E$ is $V^*_+ \cap (u - V^*_+)$.
\end{proof} 
The dual of the dual cone is the \em primal cone \em $V_+$, which is generated by unnormalized states, i.e.
\begin{align}
\label{eq:primalcone}
 V_+ &:= \left\{ \lambda \, \omega \, \middle| \, \omega \in \Omega, \lambda \geq 0 \right\} = (V_+^*)^* .
\end{align}
Consequently, if the no-restriction hypothesis holds, then a theory is completely determined by the state space, since the effect set can be derived from the state space.

The purpose of this paper is to develop the framework of GPTs without the no-restriction hypothesis. 
There are two main reasons for doing so.
Firstly,  the necessity of the no-restriction hypothesis is questionable from an operational perspective.
Indeed, considering the physical meaning of states and effects there is no reason to believe that the possible preparation procedures determine possible measurements.
Secondly, this will generalize the GPT framework to cover new scenarios that have not been accessible within the old framework.

\subsection{Relaxing the no-restriction hypothesis}

Let us note the constraints that still apply when the no-restriction hypothesis is removed. 
Clearly, effects still need to give probabilities when applied to any state. 
That is, when allowing violations of the no-restriction hypothesis, the set of probability-valued linear functionals on states in \eqref{eq:fulleffectset} remains an upper bound for possible effects. 
However, in general not all elements in this set need to represent a valid measurement outcome.
Consequently, the set of effects $E$ may actually be given by a \em subset \em of \eqref{eq:fulleffectset}.
This is the crucial new ingredient in the GPT framework that we shall use in subsequent sections.

Furthermore, we have identified the following four consistency conditions that also have to be met: 
\begin{itemize}
\item[i)] The unit measure $u$ needs to be included in the restricted set as it is crucial for the definition of measurements.
\item[ii)] For every effect $e$ included in $E$, the complement effect $\bar{e} = u-e$ needs to be included as well.
We will show in section \ref{sec:linmaps} that including an effect, but not the complement can yield inconsistencies for joint states.
\item[iii)] Coarse graining also provides effects that can be derived from existing ones.
If one does not distinguish between some measurement outcomes that are part of the same measurement, the common probabilities are given by the sum of the individual probabilities.
Due to linearity the corresponding effect describing the coarse graining is given by the sum of the individual effects. 
\item[iv)] Transformations map valid states to valid states. However, for any transformation $T$ on states, there is also an adjoint transformation $T^\dagger$ on effects defined by $e[T(\omega)]=[T^\dagger (e)](\omega)$ for all states and effects. Thus, the effect set has to respect given transformations.
\end{itemize}

Apart from these consistency restrictions, the definition of the effect set $E$ is now an independent part of the specification of the theory. 
In other words, the effect set $E$ does not depend on the state space now, and the dual cone $V^*_+$ is irrelevant for single systems.
However, we will see in section \ref{sec:gentensormax} that we still need it to classify consistent joint states.

Let us now consider how removing the no-restriction hypothesis will be useful.
As shown above, the no-restriction hypothesis connects a set of states and effects via the respective dual-cone. 
Taking a closer look at the dual cone construction in \eqref{eq:dualcone}, it can easily be seen that each extremal point of the primal cone describes a facet of the dual cone and the other way round.
Therefore, arbitrary small changes in the primal cone, can have an enormous impact on the form of the dual cone. 
Consequently, the no-restriction hypothesis makes it extremely difficult to alter a theory in a controlled way. 
However, it has always been a central motivation for the framework of generalized probabilistic theories to find alternatives to quantum theory.

We shall now show in sections \ref{sec:noisyboxworld} and \ref{sec:polymodel} that new models with interesting features can indeed be constructed when accepting violations of the no-restriction hypothesis.
Furthermore, for joint systems, we will see in sections \ref{sec:jointstates} and \ref{sec:genmaxtensor} how consistency conditions are affected.

\section{Theories with intrinsic noise}
\label{sec:noisyboxworld}

The no-restriction hypothesis guarantees that for any pure state $\omega$, there is an effect $e$, with $e\neq u$, such that ${e(\omega)=1}$.
In contrast, removing the no-restriction hypothesis allows for the modeling of systems with intrinsic noise, i.e. systems for which the unit measure is the \em only \em certain outcome for any state.
For example, an isotropic unbiased implementation of noise can be achieved  by restricting the effects to a set where the original extremal effects are replaced by mixtures with $u/2$ (except for $\emptyset$ and $u$ itself).
In order to combine noise and bias one can mix the extremal with another effect instead of $u/2$.

\begin{figure}
 \label{fig:noisyboxworld}
 \begin{tikzpicture}[scale=1.5]
  \draw[thick, dashed, red] (1,1)--(-1,1)--(-1,-1)--(1,-1)--cycle;
  \draw[thick, red, fill=red!50] (.7,.7)--(-.7,.7)--(-.7,-.7)--(.7,-.7)--cycle;  
  \draw[thick, blue, fill=structure!50!white, opacity=.5] (1,0)--(0,1)--(-1,0)--(0,-1)--cycle;
  \draw[red,fill=red] (0,0) circle (1pt);
  \draw[->,thick] (.95,.95)--(.75,.75);
  \draw[->,thick] (-.95,.95)--(-.75,.75);
  \draw[->,thick] (.95,-.95)--(.75,-.75);
  \draw[->,thick] (-.95,-.95)->(-.75,-.75);
  \node at (0,.15) {\color{red} $\{u,\emptyset\}$};
  \node at (1.15,0) {\color{structure} $\omega_1$};
  \node at (0,1.15) {\color{structure} $\omega_2$};
  \node at (-1.15,0) {\color{structure} $\omega_3$};
  \node at (0,-1.15) {\color{structure} $\omega_4$};
  \node at (1.15,1.15) {\color{red} $e_1$};
  \node at (-1.15,1.15) {\color{red} $e_2$};
  \node at (-1.15,-1.15) {\color{red} $e_3$};
  \node at (1.15,-1.15) {\color{red} $e_4$};
  \node at (.85,.65) {\color{red} $e^\lambda_1$};
  \node at (-.85,.65) {\color{red} $e^\lambda_2$};
  \node at (-.85,-.65) {\color{red} $e^\lambda_3$};
  \node at (.85,-.65) {\color{red} $e^\lambda_4$};
  \draw[red,fill=red] (0.7,0.7) circle (1pt);
  \draw[red,fill=red] (-0.7,0.7) circle (1pt);
  \draw[red,fill=red] (0.7,-0.7) circle (1pt);
  \draw[red,fill=red] (-0.7,-0.7) circle (1pt);
  \draw[structure,fill=structure] (1,0) circle (1pt);
  \draw[structure,fill=structure] (0,1) circle (1pt);
  \draw[structure,fill=structure] (-1,0) circle (1pt);
  \draw[structure,fill=structure] (0,-1) circle (1pt);
 \end{tikzpicture}
 \caption{Inclusion of noise into boxworld: State space and effects are both embedded into $\mathbb{R}^3$ and shown from above for illustration. The state space (blue) is given by a square. The effect set is the octahedron spanned by the extremal effects $e_i$, $u$ and $\emptyset$. The noisy theory has a restricted effect set with extremal effects $e^\lambda_i$.} \label{fig:noisyboxworld}
\end{figure}
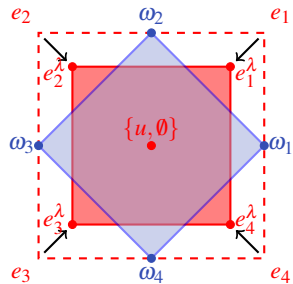

The inclusion of intrinsic noise by a modification of boxworld is illustrated in Fig.~\ref{fig:noisyboxworld}.
The state space of a single system is given by a square. 
In the traditional model the effect set is determined by the no-restriction hypothesis.
A noisy version of boxworld is given by mixing the extremal effects $e_i$ with $u/2$:
\begin{align}
 e_i \mapsto e_i^\lambda = \lambda \, e_i + (1-\lambda) \, \frac{u}{2}
\end{align}
The strength of noise is given by $(1-\lambda)$, i.e. the maximal probability from extremal effects is $\lambda$.

This model is particularly interesting with respect to its potential non-local correlations in joint systems. 
This will be examined in more detail after introducing joint states in section \ref{sec:jointstates}.

\section{Self-dualization procedure}
\label{sec:polymodel}

\begin{figure*}[t]
 \begin{tikzpicture}
  \node (polyorig) {
   \begin{tikzpicture}
    \draw[dotted] (0,0) circle (1.3);
    \draw[thick, red, fill=red!30] (30:1.3)--(90:1.3)--(150:1.3)--(210:1.3)--(270:1.3) --(330:1.3)--cycle; 
    \draw[thick, blue, fill=structure!50!white, opacity=.5] (0:1.3)--(60:1.3)--(120:1.3)--(180:1.3)--(240:1.3)--(300:1.3)--cycle;  
    \draw[red,fill=red] (0,0) circle (1pt);
    \foreach \i/\itext in {60/1,120/2,180/3,240/4,300/5,360/6} \node at (\i:1.45) {\color{structure} $\omega_\itext$};
    \foreach \i/\itext in {30/1,90/2,150/3,210/4,270/5,330/6} \node at (\i:1.45) {\color{red} $e_\itext$};
   \end{tikzpicture}
  };
  \node[right of=polyorig, node distance=4cm] (polyscaled) {
   \begin{tikzpicture}
    \draw[dotted] (0,0) circle (1.3);
    \def\rstat{1.21}
    \def\reff{1.397}  
    \draw[thick, red, fill=red!30] (30:\reff)--(90:\reff)--(150:\reff)--(210:\reff)--(270:\reff) --(330:\reff)--cycle; 
    \draw[thick, blue, fill=structure!50!white, opacity=.5] (0:\rstat)--(60:\rstat)--(120:\rstat)--(180:\rstat)--(240:\rstat)--(300:\rstat)--cycle;  
  
    \draw[red,fill=red] (0,0) circle (1pt);
    \foreach \i/\itext in {60/1,120/2,180/3,240/4,300/5,360/6} \node at (\i:1.45) {\color{structure} $\omega_\itext$};
    \foreach \i/\itext in {30/1,90/2,150/3,210/4,270/5,330/6} \node at (\i:1.6) {\color{red} $e_\itext$};
   \end{tikzpicture}
  };
  \node[right of=polyscaled, node distance=4cm] (polytrunc) {
   \begin{tikzpicture}
    \draw[dotted] (0,0) circle (1.3);
    \def\rstat{1.21}
    \def\reff{1.397}  
    \draw[thick, gray!20, fill=gray!10, opacity=.5] (30:\reff)--(90:\reff)--(150:\reff)--(210:\reff)--(270:\reff) --(330:\reff)--cycle; 
    \draw[thick, red, fill=red!30] (0:\rstat)--(60:\rstat)--(120:\rstat)--(180:\rstat)--(240:\rstat)--(300:\rstat)--cycle;  
    \draw[thick, blue, fill=structure!50!white, opacity=.5] (0:\rstat)--(60:\rstat)--(120:\rstat)--(180:\rstat)--(240:\rstat)--(300:\rstat)--cycle;  
  
    \draw[red,fill=red] (0,0) circle (1pt);
    \foreach \i/\itext/\rad in {60/1/1.45,120/2/1.45,180/3/1.65,240/4/1.45,300/5/1.45,360/6/1.65} \node at (\i:\rad) {{\color{structure} $\omega_\itext$},{\color{red}$e'_\itext$}};
    \foreach \i/\itext in {30/1,90/2,150/3,210/4,270/5,330/6} \node (p\i) at (\i:1.6) {\color{gray} $e_\itext$};
 \node (x1) at (0,0){};
 \node (x2) at (1,1){};
   \end{tikzpicture}
  };
  \draw[->,thick] (1,1) to[bend left] (3,1);
  \draw[->,thick] (5,-1) to[bend right] (7,-1);
 \end{tikzpicture}
\caption{Self-dualization of a hexagon system: The pictures show the statespace (blue) and the intersection of the effect cone (red) that lies in the same plane. In the first step the state cone will be embedded into the effect cone by an equivalence transformation \eqref{eq:equivalentsets}. In the second step the effects not included in the state cone are abandoned.} \label{fig:selfdual} 
\end{figure*}
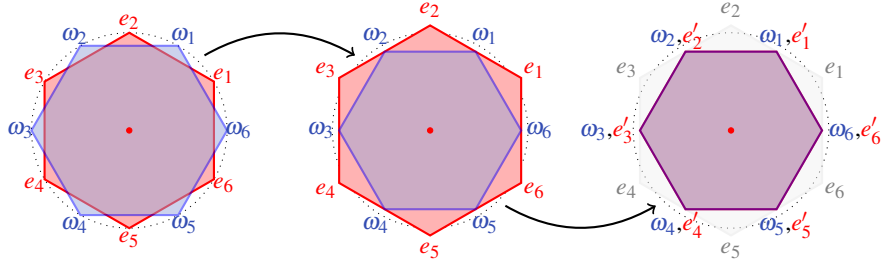

A particular class of systems that has gained a lot of interest recently are so-called \em (strongly) self-dual \em systems \cite{Barnum08,polypaper,Mueller12}. 
These are systems with a particular geometrical structure, shared by both classical probability theory and quantum theory.
For strongly self-dual systems states and effects can be identified with each other and thus be represented by the same mathematical objects.
E.g. in quantum theory both states and effects are represented by positive hermitian operators.

Formally, strong self-duality is given by the following definition.
\begin{dfn} 
\label{def:stronglyselfdual}
 A system is \em strongly self-dual \em iff there exists an isomorphism $\Phi: V^*_+ \mapsto V_+$ giving rise to a corresponding symmetric bilinear form $T$ with $T(e,f)=e[\Phi(f)]=T(f,e)$ and $T(e,e) \geq 0$ for all $e,f \in V^*$.
\end{dfn}

That is, $T$ provides a semi inner product on effects. 
In a similar way for strongly self-dual systems the inverse map $\Phi^{-1}$ leads to a semi inner product on states.

Strong self-duality greatly restricts the class of possible systems.
As we describe below, the property of `bit-symmetry' implies that a system is strongly self-dual \cite{Mueller12}, and there is evidence that non-local correlations of self-dual systems are limited \cite{polypaper}.
In this section we provide a general construction rule to modify any system, such that it resembles the behaviour of strongly self-dual systems.

\begin{thm}\label{thm:selfdualize}
 Any theory in the GPT framework can be modified to resemble strongly self-dual systems respecting Definition \ref{def:stronglyselfdual} with the dual cone $V^*_+$ replaced by a truncated cone $\mathcal{V}^*_+$.
\end{thm}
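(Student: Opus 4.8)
The plan is to construct the truncated cone $\mathcal{V}^*_+$ explicitly by first rescaling the state cone so that it fits inside the (a priori differently shaped) dual cone, and then throwing away the part of the dual cone that sticks out. Concretely, I would proceed as follows. First, fix an arbitrary theory with state space $\Omega$, primal cone $V_+$, and dual cone $V^*_+$ as in \eqref{eq:dualcone}--\eqref{eq:primalcone}. Both cones live in (dual copies of) $\mathbb{R}^n$, and both contain the ray through $u$; the unit $u$ separates the slice of normalized states from the rest. The key geometric observation, illustrated in Fig.~\ref{fig:selfdual}, is that up to an equivalence transformation of the form \eqref{eq:equivalentsets} one may assume $V_+ \subseteq V^*_+$: since $V_+$ is a closed convex cone with nonempty interior and $V^*_+$ is its dual (hence also has nonempty interior and is pointed), there is a bijective linear map $L$ fixing the normalization functional $u$ such that $L(V_+) \subseteq V^*_+$ --- intuitively one shrinks the normalized slice $\Omega$ towards its centre until it sits inside the slice $u^{-1}(1)\cap V^*_+$. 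Applying $L^{-1}$ to states and $L^T$ to effects leaves all measurement statistics invariant by \eqref{eq:equivalentsets}, so without loss of generality we work in this representation from now on.

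Next I would define the truncated cone by
\begin{align}
\label{eq:truncplan}
\mathcal{V}^*_+ := V_+,
\end{align}
i.e. after the rescaling we simply \emph{declare} the effect cone to be (a copy of) the state cone, discarding the effects of $V^*_+$ that lie outside $V_+$; this is the second step in Fig.~\ref{fig:selfdual}. The restricted effect set is then $E := \mathcal{V}^*_+ \cap (u - \mathcal{V}^*_+)$, in analogy with \eqref{eq:fulleffectset} but with $V^*_+$ replaced by $\mathcal{V}^*_+$. The map $\Phi: \mathcal{V}^*_+ \to V_+$ required by Definition~\ref{def:stronglyselfdual} is then the identity (under the canonical identification coming from the Euclidean embedding), and the associated bilinear form is $T(e,f) = e^T\!\cdot f$, which is manifestly symmetric with $T(e,e) = \|e\|^2 \geq 0$. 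So the self-duality structure is immediate once the inclusion $V_+ \subseteq V^*_+$ has been arranged; what remains is to check that $\mathcal{V}^*_+$ yields a legitimate theory, i.e. that the consistency conditions i)--iv) of Section~\ref{sec:norestriction} hold. Condition i) holds because $u \in V_+ = \mathcal{V}^*_+$. Condition ii) (closure under complements) holds because $E = \mathcal{V}^*_+ \cap (u - \mathcal{V}^*_+)$ is by construction symmetric under $e \mapsto u - e$. Condition iii) (closure under coarse-graining/sums) holds because $\mathcal{V}^*_+$ is a convex cone. Condition iv) (compatibility with transformations) is the only delicate one, and one restricts the allowed transformations to those whose adjoints preserve $\mathcal{V}^*_+$; since after the rescaling $\mathcal{V}^*_+ \cong V_+$, any symmetry of the state cone automatically acts consistently on effects, which is exactly the ``quantum-like'' behaviour we are after.

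I would also note that $E \subseteq V^*_+ \cap (u - V^*_+)$, so the restricted effects are genuinely a subset of the full no-restriction effect set --- confirming that this construction lives in the new framework and not the old one --- and that $E$ still separates states (because $\mathcal{V}^*_+ = V_+$ has nonempty interior, so its elements span $V^*$), so no dimensional collapse occurs. The main obstacle, and the step deserving the most care, is the existence of the rescaling map $L$: one must argue that a closed pointed cone with nonempty interior can always be linearly embedded into its own dual cone while fixing a chosen interior functional $u$ of both. I expect this to follow from a compactness argument on the normalized slices --- $\Omega = u^{-1}(1) \cap V_+$ is compact and $u^{-1}(1) \cap V^*_+$ has nonempty interior containing a point that can be taken as a common centre, so a sufficiently strong contraction towards that centre does the job --- but making the choice of centre and the bookkeeping of the two dual vector spaces precise is where the real work lies. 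A secondary subtlety is that the resulting theory is only \emph{weakly} a copy of a self-dual system in the sense of Definition~\ref{def:stronglyselfdual}: the form $T$ is only positive \emph{semi}definite in general (as the definition already allows, via $T(e,e)\geq 0$ rather than $>0$), and I would remark on when strict self-duality is obtained.
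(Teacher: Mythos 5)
Your proposal follows essentially the same route as the paper's proof: apply an equivalence transformation \eqref{eq:equivalentsets} to shrink the state cone until $V_+ \subseteq V^*_+$, then truncate the effect cone so that $\mathcal{V}^*_+$ coincides with $V_+$, take $E = \mathcal{V}^*_+ \cap (u - \mathcal{V}^*_+)$, and let the Euclidean scalar product serve as the bilinear form $T$. Your additional checks of the consistency conditions i)--iv) and your remarks on the existence of the rescaling map go beyond the paper's (terser) argument but do not change the approach.
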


\begin{proof}
Using our representation, we assume an embedding of effects and states in a common vector space with a scalar product mediating the application of effects on states, as in Eq.~\ref{eq:vecrep}.
We start from an arbitrary theory for which the no-restriction hypothesis holds.
The freedom of linearly transformations $L^T$ from \eqref{eq:equivalentsets} allows us to strictly enlarge the effect cone $V^*_+$, while the corresponding inverse $L^{-1}$ constricts the cone of unnormalized states $V_+$ to be strictly smaller.
Hence, one can always represent the same physical theory, with $V_+$ embedded in $V^*_+$.
We can then define a truncated the effect cone from $\mathcal{V}^*_+ \subseteq V^*_+$, such that 
$\mathcal{V}^*_+$ coincides with the state cone $V_+$. 
Hence we can describe unnormalized effects and states with the same set of vectors.
Consequently, the restriction of effects yields the vector space's scalar product to act as an inner product between states.
This satisfies the definition of strong self-duality with the dual cone $V^*_+$ exchanged for the truncated effect cone $\mathcal{V}^*_+$.
The set of effects is then constructed from $\mathcal{V}^*_+$ by $E = \mathcal{V}^*_+ \cap u-\mathcal{V}^*_+$.
\end{proof}

The connection between self-dualized systems and actual strongly self-dual systems is not only limited to a mere formal resemblance.
In fact, the following example shows that self-dualized systems have features that strongly self-dual systems have when the no-restriction hypothesis is assumed.

\subsection{Example: self-dualized polygons}

Let us illustrate the self-dualization procedure on a set of systems introduced in a previous paper \cite{polypaper}.
It is defined by two-dimensional state spaces with the shape of regular polygons.
While the cases with an odd number of vertices $n$ are strongly self-dual, the even cases are not.

For fixed $n$, let $\Omega$ be the convex hull of $n$ pure states $\{\omega_i\}$, $i=1,...,n$, with
\begin{equation}\label{eq:localpolygons}
 \omega_i = \begin{pmatrix}
             r_n \cos(\frac{2 \pi i}{n})\\
             r_n \sin(\frac{2 \pi i}{n})\\
             1
            \end{pmatrix} \in \mathbb{R}^3 ,
\end{equation}
where $r_n= \sqrt{\sec(\pi/n)}$.

The unit effect is
\begin{equation}
   u = (0,0,1)^T.
\end{equation}

The set $E(\Omega)$ of all possible measurement outcomes will be determined by the no-restriction hypothesis.
In the case of even $n$, $E(\Omega)$ is the convex hull of the zero effect, the unit effect, and $e_1,\ldots, e_n$, with
\begin{equation}
  \label{eff_even}
  e_i = \frac{1}{2} \, \begin{pmatrix}
    r_n \cos(\frac{(2 i-1) \pi}{n})\\
    r_n \sin(\frac{(2 i-1) \pi}{n})\\
    1
  \end{pmatrix}  .
\end{equation}

The odd case yields a different expression for the ray-extremal effects
\begin{equation}
  \label{eff_odd}
  e_i = \frac{1}{1 + {r_n}^2} \, \begin{pmatrix}
    r_n \cos(\frac{2 \pi i}{n})\\
    r_n \sin(\frac{2 \pi i}{n})\\
    1
  \end{pmatrix}.
\end{equation}
As shown in \ref{sec:norestriction} the complement effects $\bar{e_i} = u - e_i$ of ray extremal effects $e_i$ are also extremal in the effect set $E(\Omega)$.
Whereas for even $n$ these happen to coincide with $\bar{e_i} = e_{(i+n/2) \mathrm{mod} \ n}$, for odd $n$ the complement effects form additional extremal points of $E(\Omega)$.
In summary, $E(\Omega)$ is the convex hull of the zero effect, the unit effect $u$, the ray-extremal effects $e_1\ldots,e_n$, and for odd $n$ additionally $\bar{e_1},\ldots,\bar{e_n}$.

In the limit $n \to \infty$ both cases converge to a disc that can be regarded as the 2D subspace of a qubit.
The extremal rays of the dual cone of polygon systems with odd number of vertices, coincide with the scaled extremal states, i.e. these systems are strongly self-dual. 
However, for polygon system with an even number of vertices the primal and dual cones are only isomorphic and can be matched by a rotation of $\frac{\pi}{n}$.
That is, the even polygons are not strongly self-dual in the original models.
We will now self-dualize these even-polygon systems using the procedure described in Theorem \ref{thm:selfdualize}.  

As discussed in section \ref{sec:equivreps} there is always the freedom to apply arbitrary bijective linear maps to all effects and the corresponding inverse map on all states.
We use this to shrink the state space by $r_n \mapsto 1$ to fit in a circumscribed circle of radius one.
Applying the inverse map to effects results in a effect cone with $r_n \mapsto r_n^2$. 
This new effect cone is strictly bigger than the cone of unnormalized states. 
By truncating this effect cone, such that the new extremal effect $e'_i$ are given by
\begin{align}
 e'_i &= \frac{1}{2} \, \left(e_i + e_{(i+1) \mathrm{mod} \ n}\right) = \frac{1}{2} \begin{pmatrix}
 \cos\left(\frac{2 \, \pi i}{n}\right)\\
 \sin\left(\frac{2 \, \pi i}{n}\right)\\
 1
 \end{pmatrix} = \frac{\omega_i}{2},
\end{align}
the primal cone coincide with the new effect cone generated by the restricted effect set.    

Let us demonstrate the self-dualization procedure explicitly,  by using the polygon with $n=4$ (this is the boxworld model).
In the first step the pure states and effects are transformed to the equivalent representation given in \eqref{eq:gbitstates} and \eqref{eq:unrestrictedgbiteffects}.
In this representation the effect cone is completely embedded in the cone of unnormed states. 
The actual self-dualization is then done by exchanging $e_i$ for $e'_i = \omega_i / 2$, shrugging off the effects not included in the primal cone.

For all self-dualized polygon models, another interesting feature emerges for the restricted case.
Namely, there exists a specific pure state $\bar{\omega}$ for each pure state $\omega$, such that they can be perfectly distinguished by an effect $e$ with $e(\omega)=1$ and $e(\bar{\omega})=0$.
Furthermore, each pair of perfectly distinguishable states can be mapped reversibly to any other pair of perfectly distinguishable states.
This feature is known as \em bit symmetry, \em and was shown to only hold for strongly self-dual systems in the traditional framework \cite{Mueller12}.

This demonstrates that the self-dualization procedure can actually reproduce properties thought to be specific for actual strongly self-dual systems.
Note that the mathematical description of actual strongly self-dual systems can be complex.
Using  self-dualized systems might be an alternative that helps to identify new features of strongly self-dual systems,
even if one is not interested in the relaxation of the no-restriction hypothesis.

\subsection{Spekkens's toy theory}

In \cite{Spekkens} Spekkens introduced a toy theory which replicates many features of quantum theory. 
For example, it exhibits a no-cloning theorem and a teleportation protocol.
The theory is not explicitly probabilistic, since outcomes are not explicitly assigned probabilities.
Instead, a graphical calculus is used.
Given a state $\omega$, the outcome $i$ is only specified to  be `possible' or `impossible'.
The Spekkens theory in its original form also has no notion of arbitrary \em convex mixing\em, i.e.~it does not have the property for any pair of states $\omega_1$ and $\omega_2$, there exists a state $p\omega_1+(1-p)\omega_2$ for all probabilities $p\in [0,1]$.

The ability to form convex mixtures is crucial to GPTs, and in particular to its operational motivation.
Fortunately, there is a natural extension of Spekkens theory which is probabilistic and which does allow convex mixing (the probabilistic version of this theory was also introduced previously by Hardy in \cite{Hardy}).
The state space $\Omega$ of a single system is then the octahedron.
In the representation that we have used, the six extremal states (i.e.~the pure states) are just given by  the co-ordinates of the octahedron in $\mathbb{R}^3$, with an extra component for normalization.
\begin{figure}[h!]
\centering
\includegraphics[width=.3 \linewidth]{./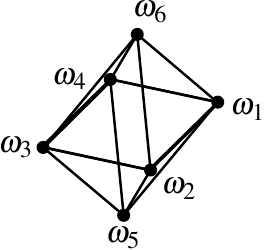}
\caption{The state space of the Spekkens model, with the six pure states $\omega_i$ labelled.}\label{fig:spekkens}
\end{figure}\label{fig:spekkens}
For example, the four extremal states that form the square base of each tetrahedron are identical to the states for boxworld (see Fig.~\ref{fig:spekkens}).
That is, for $i=1,\dots, 4$ the states are:
\begin{equation}\label{eq:localpolygons}
 \omega_i = \begin{pmatrix}
             \cos(\frac{2 \pi i}{4})\\
	    \sin(\frac{2 \pi i}{4})\\
            0 \\
             1
            \end{pmatrix} \in \mathbb{R}^4 ,
\end{equation}
and for $i=5,6$ the states are
\begin{equation}\label{eq:localpolygons}
 \omega_i = \begin{pmatrix}
             0\\
	    0\\
            \pm1 \\
             1
            \end{pmatrix} \in \mathbb{R}^4 ,
\end{equation}

Now, the dual space of an octahedron is the cube. 
However, in the Spekkens theory, the space of effects is \em identical \em to the state space: it is also the octahedron depicted in Fig.~\ref{fig:spekkens}.
Since the octahedron can be obtained by restricting the cube (in the same way that is depicted for the hexagon in Fig.~\ref{fig:selfdual}),we see that the Spekkens theory provides an example of a self-dualized theory. 
In particular, the convex probabilistic version of it is obtained using the self-dualization procedure defined in Theorem \ref{thm:selfdualize}, and as described above for self-dualized polygons.
Indeed, as with boxworld, the restricted effects are given by:
\[
e'_i=\frac{\omega_i}{2}
\]

Hence we see that, at least for single systems, the Spekkens theory can be seen as an \em extension \em of self-dualized boxworld: the state and effect space of the Spekkens theory contain the state and effect space respectively of self-dualized boxworld.
We develop the analysis of joint systems for the Spekkens theory in Section \ref{sec:spekkensjoint}.

We note that the single-system state space is identical to that of \em stabilizer quantum mechanics\em, for which the only allowed states are the eigenstates of the Pauli operators, and the allowed transformations are the Clifford operations.
As discussed in \cite{Spekkens} and further in \cite{Coecke}, the Spekkens theory and stabilizer quantum mechanics differ in the group of reversible transformations that each theory specifies.

\section{Joint systems in the traditional GPT framework}
\label{sec:jointstates}

In the preceding sections we have not distinguished between single systems and joint systems.
That is, our discussion so far (e.g. of self-dualization) has not involved any potential \em subsystem \em structure, whereby a system $C$ can be divided into subsystems $A$ and $B$, with each subsystem having well-defined states and effects.
In the next section we shall consider how relaxing the no-restriction hypothesis affects composite systems. 
Before doing so, in this section we recall the treatment of joint systems in the traditional framework, i.e.~when the no-restriction hypothesis is assumed to hold.

We will restrict the discussion of joint systems to the bipartite case with two subsystems, as the generalization of multipartite systems is straightforward. 
Bipartite joint states are given by elements of the product space
\begin{align} 
\label{eq:productspace}
V^{AB} = V^A \otimes V^B
\end{align}
and joint effects are elements of $V^{AB*} = V^{A*} \otimes V^{B*}$ respectively \footnote{It can be shown that this follows from two conditions on joint states: i) local tomography ii) the no-signalling principle. The no-signalling principle forbids sending information by local operations on a joint state, and will be explained in more detail in section \ref{sec:linmaps}. Local tomography is the identification of joint states by combinations of local measurements.}.

We will represent joint states and joint effects by $n \times m$ matrices, with $n = \dim V^A = \dim V^{A*}$, $m = \dim V^B = \dim V^{B*}$.
As for single systems, the application of effects on states results in the sum of the entry-wise products.
This can be elegantly written as the Hilbert-Schmidt inner product
\begin{align}
 e^{AB}\left(\omega^{AB}\right) &= \Tr \left(e^T \!\!\!\cdot \omega\right) = \sum_{i,j} \epsilon_{ij} \, w_{ij},
\end{align}
where we write $e^T \!\!\!\cdot \omega$ for the matrix product between the transpose of matrix $e$ representing the joint effect $e^{AB}$ and the matrix $\omega$ representing the joint state $\omega^{AB}$. 

To define a composite system for a particular GPT (with specified state and effect spaces for individual systems), we must define the set of joint states $\Omega^{AB}=\{\omega^{AB}\}$, and the set of joint effects $E^{AB}=\{e^{AB}\}$, such that these are consistent with the individual systems.
If the no-restriction hypothesis holds, then, as before, once the set of joint states $\Omega^{AB}$ is defined, the set of effects $E^{AB}$ is determined. 
In this situation we need only consider the definition of $\Omega^{AB}$ in order to specify the behaviour of composite systems.
There is much freedom in defining $\Omega^{AB}$, but there are two boundary cases which we now discuss.

\subsubsection{Lower bound on joint systems}

Consider \em independently prepared \em systems $A$ and $B$ with states $\omega^A \in \Omega^A$, $\omega^B \in \Omega^B$.
Treating the systems \em jointly \em as a composite $AB$, the overall preparation is represented by the product state $\omega^{AB} = \omega^A \otimes \omega^B$, with $\omega^{AB}\in V^{AB}$.
However, just as classical mixtures are allowed for single systems, for joint systems mixtures between product states give valid joint states again.
This corresponds to the ability of experimenters to classically correlate the preparations and measurements of the individual systems, e.g. two experimenters can agree on specific settings.

The set of unnormalized states only containing product states and their mixtures is known as the \emph{minimal tensor product} $A_+ \tensormin B_+$.

\begin{dfn}
\label{def:tradtensormin}
 The \emph{minimal tensor product} is given by
 \begin{align}
  A_+ \tensormin B_+ &:= \left\{\omega^{AB} \in V^{AB} \,\middle|\, \omega^{AB} = \sum_i \lambda_i \, \omega_i^A \otimes \omega_i^B, \right. \\\nonumber
 {} & \qquad\qquad\qquad\qquad \left. \omega_i^A\in A_+, \omega_i^B \in B_+, \lambda_i \geq 0 \right\}.
 \end{align}
 It is the smallest possible set of unnormalized joint states $\omega^{AB}$ that is compatible with given state cones $A_+ \equiv V^A_+$, $B_+ \equiv V^B_+$ of subsystems $A$,$B$.
\end{dfn}

Similar reasoning applies to measurements, and so the set of joint effects is lower-bounded by the convex hull of product effects.
Importantly, this includes the joint unit measure $u^{AB} = u^A \otimes u^B$, which is uniquely defined due to the equivalence principle.
Hence, normalization of joint states $\omega^{AB}$ is represented by the condition $u^{AB}(\omega^{AB})=1$.
This allows us to define the bipartite state space $ \Omega^{AB}_\mathrm{min}$ corresponding to the minimal tensor product:
\begin{align}
 \label{eq:minimalstatespace}
 \Omega^{AB}_\mathrm{min} &:= \left\{\omega^{AB} \in A_+ \otimes_\mathrm{min} B_+ \,\middle|\, u^{AB} \left(\omega^{AB}\right) = 1\right\}\\
 {} &= \left\{ \omega^{AB} \in V^{AB} \,\middle|\, \omega^{AB} = \sum_i p_i \, \omega_i^A \otimes \omega_i^B, \right. \\\nonumber
 {} & \qquad\qquad\qquad\left. \omega_i^A\in\Omega^A, \omega_i^B \in \Omega^B, p_i \geq 0, \sum_i \, p_i = 1 \right\}.
\end{align}

For classical subsystems (i.e.~a simplex), the joint states and effects defined by the minimal tensor product is sufficient to describe joint classical systems.
Theories with non-classical subsystems, however, allow joint states that cannot be interpreted as a mixture of product states, i.e.~entangled states.
The other extreme to the minimal tensor product allows all possible entangled states, as we now show.

\subsubsection{Upper bound on joint systems}

Everything introduced so far is valid independent of the no-restriction hypothesis.
This changes now, as we ask for the maximal sets of joint states and effects consistent with the structure of the single systems.

First, let us focus on the traditional GPT framework with single systems obeying the no-restriction hypothesis.
Given a specific state space the no-restriction hypothesis determines the effects for the single systems.
As argued above, the joint system should at least incorporate product effects and their mixtures. 
Applying such joint effects to any potential joint state $\omega^{AB}$ should give probabilities. 
In particular this implies that the joint states form a subset of the following set of linear elements. 

\begin{dfn}
 The \emph{maximal tensor product} is defined as
\begin{align}
\label{eq:tradtensormax}\nonumber
 A_+ \otimes_\mathrm{max} B_+ &:= \left\{ \omega^{AB} \in V^{AB} \middle| \left(e^A \otimes e^B\right) \!\!\!\left[\omega^{AB}\right] > 0, \right.\\
 {} & \qquad\qquad\qquad\quad\left.\forall e^A \in E^A, e^B \in E^B \right\}\\
\label{eq:tradtensormax2} 
 {} &= \left(A_+^* \tensormin B_+^*\right)^*.
\end{align}
 It is the largest possible set of unnormalized joint states $\omega^{AB}$ that is compatible with given state cones $A_+$, $B_+$ of subsystems $A$, $B$ that respect the no-restriction hypothesis.
\end{dfn}

Note that the second equality arises just by definition of the dual cone \eqref{eq:dualcone}.
Hence, we see that the maximal tensor product for states is given by the maximal set of joint states consistent with the minimal tensor product for effects.
Similarly the maximal tensor product for effects is defined as the maximal set of joint effects consistent with the minimal tensor product for states.
Elements in the maximal tensor product, but not in the minimal tensor product are called \emph{entangled}. 

To summarise our constructions in this section:
the definition of a GPT includes the tensor product, which specifies the composition of subsystems.
The minimal and maximal tensor product are only the extreme cases where the joint state space $\Omega^{AB}$ is chosen as smallest or the biggest set compatible with the state spaces $\Omega^A$, $\Omega^B$ of single systems.
In general, a GPT can be defined to include any set of joint states between those extremes. 

For example, the joint state space in quantum theory lies strictly between the minimal and maximal tensor product.
E.g. the partial transposed of density matrices representing entangled states of two qubits or a qubit and a qutrit are known to give invalid states for the quantum tensor product, because they are not positive on all entangled effects \cite{Horodecki96}.
However, these states give positive results for separable measurements, i.e. they are in the maximal tensor product.
Note that these states should not be misunderstood as part of quantum theory, but form a separate toy theory that omits any entangled measurements. 
Nevertheless, the additional states in the maximal tensor product of local quantum systems are useful for the study of entanglement in standard quantum theory, as they correspond exactly to the set of entanglement witnesses.  

\subsubsection{Joint states as linear maps}
\label{sec:linmaps}

For our generalization of the maximal tensor product, we shall use the following conception of joint states.
Joint states can linearly map effects from one part of the joint system to unnormalized states of the other subsystem.
This can be conveniently shown in the representation of joint states as matrices, since
\begin{align}
 \left(e^A \otimes e^B\right) \!\!\!\left[\omega^{AB}\right] &= \Tr \left[\left(e^A \otimes e^B\right)^T \!\!\!\!\!\!\cdot\omega^{AB}\right] = \left(e^A\right)^T \!\!\!\!\!\cdot \omega^{AB} \!\!\cdot e^B.
\end{align}
Using associativity of the matrix product, we can interpret parts of the expression $\left(e^A\right)^T \cdot \omega^{AB} \cdot e^B$ as `effective' states of the subsystems $A$ and $B$. 
We define these \em conditional states \em as
\begin{align}
 \omega^A_{e^B} & :=\omega^{AB} \!\!\cdot e^B\\
 \omega^B_{e^A} & := \left(e^A\right)^T \!\!\!\!\!\cdot \omega^{AB}
\end{align}
These are unnormalized states for system $A$ and $B$ respectively.
Physically, these can be regarded as `post-measurement` states on one part of the joint system, conditioned on a particular measurement outcome on the other part.
This process of remotely preparing a state by a measurement on the other part of a joint state is usually referred to as `steering' \cite{oppenheim}.
It demonstrates that, when measuring only part of a joint system, the joint state acts as a linear map from effects of one side of the system to unnormalized states of the other part.
It can be shown that the maximal tensor product coincides exactly with all possible linear maps of this form, i.e. it corresponds to all potential joint states that have valid conditional states for non-restricted systems \cite{Barnum08}.
This property will be central for the generalization of the maximal tensor product in the next section.

Conditional states at $A$ are unnormalized: they are weighted with the probability of obtaining the corresponding measurement outcome at $B$.
That is, the probability accounts for the potential ignorance of the outcome for observers at $B$.
Consequently, if one knows the measurement outcome in $B$ the effective description of the state in $A$ is given by the normalized conditional state:
\[
\tilde{\omega}^A_{e^B} = \frac{\omega^A_{e^B}}{p(e^B|\omega^{AB})} = \frac{\omega^A_{e^B}}{u(\omega^A_{e^B})}.
\]
The \em marginal state \em or \em reduced state \em $\omega^A_{u^B}$ gives the description of the effective state on part $A$ of a joint state $\omega^{AB}$.
This is a conditional state with $e^B=u^B$, and is already normalized i.e. $\tilde{\omega}^A_{u^B}=\omega^A_{u^B}$.
  
Note that this formalism still applies if the parts of the system are space-like separated, i.e. if there is no causal relationship between the measurement on the system $B$ and  the system $A$.
However, the no-signaling principle states that steering cannot be used to transmit information, i.e. it does not allow for communication faster than the speed of light. 
The relationship between steering and the no-signaling principle is shown by the following theorem.
First, we call a set of effects $\{e^A_i\}_i$, for any system $A$, a \em perfect measurement \em if
\[
\sum_i e^A_i = u^A.
\]
An \em imperfect measurement \em is a set of effects $\{e^A_i\}_i$ that is not a perfect measurement. 

\begin{thm}
Assuming the no-signalling principle, steering implies that all measurements are perfect measurements. 
\end{thm}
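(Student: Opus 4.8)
The plan is to show that if steering respects no-signalling, then any measurement $\{e^A_i\}$ must satisfy $\sum_i e^A_i = u^A$. The key idea is to use the conditional/steering map defined above: for a bipartite state $\omega^{AB}$, measuring $\{e^A_i\}$ on side $A$ remotely prepares the (unnormalized) conditional states $\omega^B_{e^A_i} = (e^A_i)^T \cdot \omega^{AB}$ on side $B$. The marginal state at $B$ is $\omega^B_{u^A} = (u^A)^T \cdot \omega^{AB}$, and it is the only thing an observer at $B$ can access if they are ignorant of the choice of measurement made at $A$. If the observer at $A$ has a \emph{choice} between several different measurements, then by no-signalling the marginal at $B$ must be independent of that choice.

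First I would set up the argument as follows. Fix a system $A$ for which we want to prove all measurements are perfect, and pick any bipartite state $\omega^{AB}$ — the natural choice is a maximally correlated / maximally entangled state (or simply any state whose reduced state at $B$ is faithful, so that the map $e^A \mapsto \omega^B_{e^A}$ is injective on the relevant span, e.g. the canonical state implementing an isomorphism between $V^{A*}$ and $V^B$). For such a state, by linearity of the steering map, $\sum_i \omega^B_{e^A_i} = \left(\sum_i e^A_i\right)^T \cdot \omega^{AB} = \omega^B_{\sum_i e^A_i}$. Now observe that the left-hand side is the ``total'' unnormalized state obtained by summing over all outcomes of the measurement $\{e^A_i\}$, which physically equals the marginal state at $B$ \emph{if the measurement were performed and the outcome forgotten}. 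By no-signalling, this total state cannot depend on which measurement was performed at $A$; in particular it must equal $\omega^B_{u^A}$, the marginal obtained from the trivial one-outcome measurement $\{u^A\}$. Hence $\omega^B_{\sum_i e^A_i} = \omega^B_{u^A}$, i.e. $\left(\sum_i e^A_i - u^A\right)^T \cdot \omega^{AB} = 0$.

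Next I would conclude by invoking faithfulness: since $\omega^{AB}$ was chosen so that $e^A \mapsto (e^A)^T \cdot \omega^{AB}$ is injective (which is possible because the dimensions of $V^{A*}$ and $V^B$ agree and one can always compose $A$ with a copy of itself via a ``maximally entangled'' state as in the standard GPT treatment of steering, cf.\ \cite{Barnum08}), the equation $\left(\sum_i e^A_i - u^A\right)^T \cdot \omega^{AB} = 0$ forces $\sum_i e^A_i = u^A$. Therefore $\{e^A_i\}$ is a perfect measurement, which is what we wanted. One should also spell out why the ``forget the outcome'' state at $B$ really is forced to be choice-independent: this is precisely the content of no-signalling, since otherwise an experimenter at $B$ (performing any fixed fiducial tomography) could distinguish which measurement was chosen at $A$ and thereby receive a signal.

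The main obstacle I expect is the justification that a suitable faithful bipartite state $\omega^{AB}$ exists and is itself a \emph{legitimate} joint state in the relevant tensor product — one must be careful that the state used to run the steering argument is allowed by the composition rule, not merely a formal linear object. In the traditional framework this is handled by taking $\omega^{AB}$ in the maximal tensor product $A_+ \otimes_{\mathrm{max}} A_+^*$ (or the canonical ``maximally entangled'' state), whose conditional-state map is exactly the identity on $V^{A*}$ up to the chosen isomorphism, so faithfulness is automatic. I would therefore devote the bulk of the write-up to pinning down this canonical state and checking that its reduced states and steering map behave as claimed; the remaining algebra (linearity of the steering map and the final cancellation) is routine.
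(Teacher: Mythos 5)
Your proof is correct and follows essentially the same argument as the paper: both rest on the linearity of the conditional-state (steering) map, so that the coarse-grained post-measurement state equals $\omega^B_{\sum_i e^A_i}$, and on no-signalling forcing this to coincide with the marginal $\omega^B_{u^A}$. The only difference is the closing step: you fix a single \emph{faithful} joint state and invoke injectivity of $e^A \mapsto (e^A)^T\!\cdot\omega^{AB}$, whereas the paper simply requires the equality $\omega^A_{\sum_i e_i^B} = \omega^A_{u^B}$ to hold for \emph{every} joint state. The obstacle you flag --- whether a legitimate maximally entangled state exists in the chosen tensor product --- is real if you insist on an entangled state, but it is unnecessary: the minimal tensor product is a lower bound on any composition rule and contains all product states, and a correlated separable state $\tfrac{1}{d}\sum_j \omega_j\otimes\omega_j$ built from a spanning set of normalized local states (which exists by the definition of the dimension of $V$) already makes the steering map injective. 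So the bulk of the write-up you planned to spend on the canonical entangled state can be replaced by this one-line observation, which is in effect what the paper's quantification over all joint states accomplishes.
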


\begin{proof}
Consider two observers in part $A$ and $B$ respectively sharing a joint state $\omega^{AB}$. 
The observer in $B$ performs a measurement on his part and gets some measurement outcome $e^B_j$. 
Knowing the outcome the description of the system in $A$ from his point of view is given by the normalized conditional state $\tilde{\omega}^A_{e_j^B}$. 
The other observer `knows' only the coarse graining of the different measurement outcomes.
I.e. from his point of view the state in $A$ is an ensemble of possible `post-measurement states' $\{\tilde{\omega}^A_{e_i^B}\}$.

Remember that the equivalence principle gives a one-to-one correspondence of states and specific measurement statistics.
Consequently, no-signaling requires the state in $A$ after the measurement on $B$ to be identical to the original marginal state $\omega^A_{u^B}$ in order to prevent information transfer, i.e.
\begin{align}
 \sum_i p_i \, \tilde{\omega}^A_{e_i^B} = \sum_i \omega^A_{e_i^B} = \omega^A_{\sum_i e_i^B} = \omega^A_{u^B},
\end{align}
where we used the definition of the normalized conditional state and the linearity of effects.

Since the coarse grained conditional state needs to be equal to the marginal state for any joint state
\begin{align}
 \sum_i e_i^B=u^B.
\end{align}
\end{proof}

We will use the interpretation of the maximal tensor product as the set of all positive linear maps to generalize it for systems violating the no-restriction hypothesis.

\section{The generalized maximal tensor product}\label{sec:genmaxtensor}

As we have discussed, by removing the no-restriction hypothesis, the definition of a physical system now needs a specification of \em both \em the state space and the effect set.
That is, the set of allowed states and the set of allowed effects can be chosen independently---except for the constraints discussed in section \ref{sec:norestriction}.
Let us now consider the specification of joint systems when the no-restriction hypothesis is removed.

The definition of the minimal tensor product $A_{+}\tensormin B_{+}$ makes no reference to the effect sets $E^A$ and $E^B$.
I.e. it is constructed by products and their convex combinations.
Therefore the minimal tensor product can be defined without assuming the no-restriction hypothesis, and hence carries over to our more general situation. 
Indeed, everything that we have introduced for joint systems so far is valid independently of the no-restriction hypothesis --- with one exception.

The exception is the maximal tensor product.
As before, we expect the maximal tensor product to comprise all joint states that are compatible with the given subsystems.
Compatibility can be broken down to two requirements: i) non-negative results on local effects ii) valid conditional states.
For non-restricted systems both requirements are equivalent, as the no-restriction hypothesis implies consistent mappings (i.e. valid conditional states) if and only if local effects give non-negative results on joint states. 
Now, for the general case (i.e. without the no-restriction hypothesis), valid conditional states still guarantees non-negativity on local effects.
However, the implication in the other direction is no longer secured.

For example, consider attempting to use the same construction as before, i.e. we start with the minimal tensor product of effects and determine all elements of the joint system that give positive results.
The resulting elements do not depend on the state spaces of the single systems \em at all\em, since the effects are decoupled from the state space due to the abandoned no-restriction hypothesis.
Hence the resulting joint states are not forced to be consistent with the subsystems: we give an example of such a failure of consistency below.

\subsection{Failure of the traditional maximal tensor product}

Before generalizing the maximal tensor product we will show that the traditional construction rules fail for restricted systems. 

The traditional maximal tensor product $A_+ \tensormax B_+$ is given by the dual of the set of separable effects.
For restricted systems this yields two different variants. 
Equation \eqref{eq:tradtensormax} seems to suggest a construction based on the restricted effects, whereas \eqref{eq:tradtensormax2} utilizes the subsystems' dual cones, which are generated by the potential set of unrestricted effects.
We show that neither choice gives the set of all joint states consistent with restricted subsystems.

The first variant is constructed as follows.
Consider the restricted effects $E^A$ of a subsystem $A$ with an effect cone $E^A_+ := \{\lambda \, e^A \,|\, e^A \in E^A, \lambda \geq 0\}$.
Following equation \eqref{eq:primalcone} we can construct a virtual, non-restricted system $\mathcal{A}$ with the state cone given by
\begin{align}
\label{eq:virtualcones}
 \mathcal{A}_+ &:= \left\{ \omega^A \in V^A \, \middle| \, e^A(\omega^A) \geq 0 \quad \forall e^A \in E^A_+ \right\} \supseteq A_+ \\
 \Rightarrow \mathcal{A}^*_+ &= E^A_+.
\end{align}
I.e. the virtual system extends the unnormalized states, such that the no-restriction hypothesis is satisfied.
Thus, the potential joint states from \eqref{eq:tradtensormax}, correspond actually to the traditional maximal tensor product $\mathcal{A}_+ \tensormax \mathcal{B}_+$ of the virtual systems $\mathcal{A}$, $\mathcal{B}$.

Recall that the interpretation of joint states as positive linear maps, $\mathcal{A}_+ \tensormax \mathcal{B}_+$ is exactly the set of all maps from the restricted effect cones $E^A_+$ ($E^B_+$) to the unnormalized virtual states $\mathcal{B}_+$ ($\mathcal{A}_+$) on the other side of the bipartite system.
In other words, this construction includes joint states that allow the preparation of states in the subsystems not limited to the initial definition of the state spaces $\Omega^A$, $\Omega^B$, but to those of the virtual systems instead. 

For example in a bipartite system of self-dualized boxworld with extremal states according to \eqref{eq:gbitstates} and restricted extremal effects $e'_i = \omega_i / 2$ the potential joint state
\begin{align}\label{eq:counterexample}
 \omega^{AB} &= \begin{pmatrix}
  1 & -1 & 0\\
  1 & 1 & 0\\
  0 & 0 & 1  
 \end{pmatrix} \in \Omega^{\mathcal{AB}}_\mathrm{max}
\end{align} 
gives positive values on any pair of restricted effects.
However, some conditional states are not valid for the actual system $A$, e.g. $\tilde{\omega}^A_{e'_1} = (-1,1,1)^T \notin \Omega^A$.

The second variant of the traditional maximal tensor product is based on the dual cones $A_+$, $B_+$ according to \eqref{eq:tradtensormax2}.
The resulting joint states are also consistent with the restricted effects, since the latter is included in the set of all of effects.
However, this construction omits joint states which are consistent \em only \em with the restricted effects.
For example, for self-dualized boxworld the identity matrix would not be included, although it has valid conditional states and gives positive results on any pair of effects.

\subsection{Construction of the generalized maximal tensor product}
\label{sec:gentensormax}

As shown above, the traditional construction rules for the maximal tensor product lead to inconsistencies when applied to theories not obeying the no-restriction hypothesis.
In this section we shall construct a \em generalized \em maximal tensor product $A_+ \gtensormax B_+$: this will give the maximal set of joint states that is consistent with general subsystems, irrespective of whether the no-restriction hypothesis is assumed to hold.
In other words, the generalized maximal tensor product contains all bipartite states whose conditional (i.e. also marginal) states are elements of the original state spaces.  

\begin{figure*}
\label{fig:gentensormax}
 \begin{tikzpicture}
  \node(jointstate1) {
   \begin{tikzpicture}
    \draw[fill=structure!50!white] (-.4,.75) [rounded corners=10pt]-- (.3,.75) -- (.3,-.75)  [sharp corners]-- (-.4,-.75) -- cycle;
    \node at (-0.05,0) {$\omega^{AB}$};
    \draw[fill=red!50] (-1,.2) [rounded corners=7pt]-- (-1.5,.2) -- (-1.5,.75) [sharp corners]-- (-1,.75) --cycle;
    \node at (-1.25,.5) {$e^A$};
    \draw (-.4,.5) -- (-1,.5) (-.4,-.5)--(-1,-.5);
   \end{tikzpicture}  
  };
  
  \node(conditional1) [below=of jointstate1, node distance=1cm]
  {
   \begin{tikzpicture}
    \draw[fill=structure!50!white] (-.3,.3) [rounded corners=8pt]-- (.4,.3) -- (.4,-.3) [sharp corners]-- (-.3,-.3) --cycle;     
    \draw (-.3,0)--(-.8,0);    
    \node at (0,.3) {$\tilde{\omega}^B_{e^A}$};
    \node at (1.2,.35) {\Large{$\in \Omega^B$}};
   \end{tikzpicture}
  };
  
  \node(ABdirection) [right=of jointstate1, node distance=2cm, yshift=-1cm]{
   \begin{tikzpicture}
    \draw[fill=structure!50!white, opacity=.5] (0,0) circle (50pt);
    \node at (-1.1,1.1) {$\left(E^A_+ \tensormin B^*_+ \right)^*$};
   \end{tikzpicture}
  };

  \node(BAdirection) [right=of ABdirection, xshift=-3.25cm, yshift=-1cm]{
   \begin{tikzpicture}
    \draw[fill=structure!50!white, opacity=.5] (0,0) circle (50pt);
    \node at (-1,-1) {$\left(A^*_+ \tensormin E^B_+ \right)^*$};
   \end{tikzpicture}
  };
  
  \node(jointstate2) [right=of BAdirection, yshift=2cm] {
   \begin{tikzpicture}
    \draw[fill=structure!50!white] (-.4,.75) [rounded corners=10pt]-- (.3,.75) -- (.3,-.75)  [sharp corners]-- (-.4,-.75) -- cycle;
    \node at (-0.45,0) {$\omega^{AB}$};
    \draw[fill=red!50] (-1,-.2) [rounded corners=7pt]-- (-1.5,-.2) -- (-1.5,-.75) [sharp corners]-- (-1,-.75) --cycle;
    \node at (-1.5,-.5) {$e^B$};
    \draw (-.4,.5) -- (-1,.5) (-.4,-.5)--(-1,-.5);
   \end{tikzpicture}  
  };
  
  \node(conditional2) [below=of jointstate2, node distance=1cm]
  {
   \begin{tikzpicture}
    \draw[fill=structure!50!white] (-.3,.3) [rounded corners=8pt]-- (.4,.3) -- (.4,-.3) [sharp corners]-- (-.3,-.3) --cycle;     
    \draw (-.3,0)--(-.8,0);    
    \node at (0,.3) {$\tilde{\omega}^A_{e^B}$};
    \node at (1.2,.4) {\Large{$\in \Omega^A$}};
   \end{tikzpicture}
  };

  \draw[<-,very thick] (3.75,0.5) to[bend right] (1,.7);
  \draw[<-,very thick] (6,-2.5) to[out=90,in=160] (9.25,.7);
 
  \node(gentensormax) at (4.65,-1.5) {{$A_+ \gtensormax B_+$}};

 \draw (-1.5,1.25) rectangle (1.5,-3);
 \draw (7.75,1.25) rectangle (10.75,-3);
 \node[rotate=90] at (9.25,-1.5){\Large{$\Leftrightarrow$}};
 \node[rotate=90] at (0,-1.5){\Large{$\Leftrightarrow$}};

 \end{tikzpicture}

\caption{Illustration of the construction of the generalized maximal tensor product}
\end{figure*}
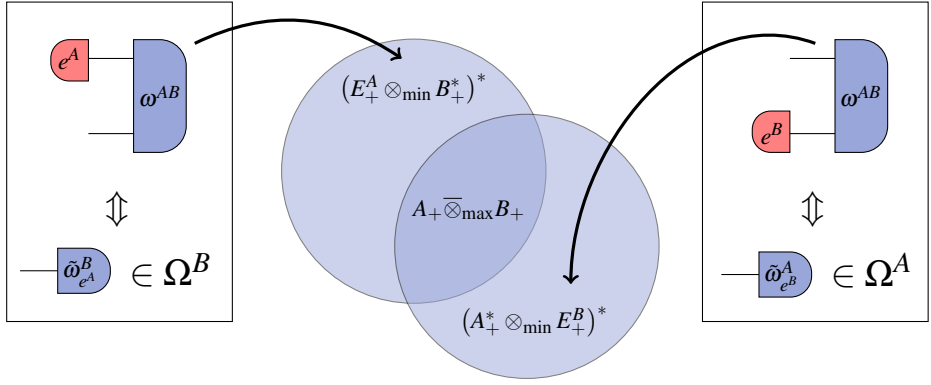

\begin{dfn}
 The \em generalized maximal tensor product \em of systems $A$, $B$ with primal cones $A_+$, $B_+$, dual cones $A^*_+$, $B^*_+$ and effect cones $E^A_+$, $E^B_+$ is given by
\begin{align}
\label{eq:gentensormax}
 A_+ \gtensormax B_+ &:= \left(E^A_+ \tensormin B_+^*\right)^* \!\!\!\cap \left(A_+^* \tensormin E^B_+\right)^* \\
 {} &= \left(E^A_+ \tensormin B_+^* \cup A_+^* \tensormin E^B_+\right)^*. \nonumber
\end{align}
\end{dfn}

For the saturated case, dual cones and effect cones are identical, and we recover the usual maximal tensor product as follows.

\begin{prop}
Suppose that $E^A_{+}=A^*_+$ and $E^B_{+}=B^*_+$.
Then $A_+ \gtensormax B_+=A_+ \tensormax B_+$.
\end{prop}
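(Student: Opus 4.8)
The plan is to substitute the hypotheses $E^A_+ = A^*_+$ and $E^B_+ = B^*_+$ directly into the definition \eqref{eq:gentensormax} of the generalized maximal tensor product, and then show the resulting expression collapses to the characterization \eqref{eq:tradtensormax2} of the traditional maximal tensor product. Concretely, under the hypotheses the first line of \eqref{eq:gentensormax} becomes
\begin{align}
A_+ \gtensormax B_+ = \left(A^*_+ \tensormin B_+^*\right)^* \cap \left(A_+^* \tensormin B^*_+\right)^*,
\end{align}
which is simply $\left(A^*_+ \tensormin B_+^*\right)^*$ intersected with itself. So the statement reduces to the identity $A_+ \tensormax B_+ = \left(A^*_+ \tensormin B^*_+\right)^*$, which is exactly \eqref{eq:tradtensormax2} in Definition \ref{def:tradtensormin}'s companion (the maximal tensor product definition). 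Thus the proof is essentially a one-line unpacking of definitions.

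First I would recall the definition of $A_+ \tensormax B_+$ from \eqref{eq:tradtensormax}, namely the set of $\omega^{AB}$ on which every product effect $e^A \otimes e^B$ (with $e^A \in E^A$, $e^B \in E^B$) gives a non-negative value, together with the observed second equality \eqref{eq:tradtensormax2} that rewrites this as $(A^*_+ \tensormin B^*_+)^*$. Then I would note that the defining condition for membership in $(E^A_+ \tensormin B^*_+)^*$ is non-negativity against all elements of $E^A_+ \tensormin B^*_+$; since the minimal tensor product is generated (as a cone) by products $e^A \otimes f^B$ with $e^A \in E^A_+$ and $f^B \in B^*_+$, and duality is preserved under taking the cone generated by a set, membership reduces to non-negativity against all such products. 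Under $E^A_+ = A^*_+$ this set of products coincides with the generators of $A^*_+ \tensormin B^*_+$, so $(E^A_+ \tensormin B^*_+)^* = (A^*_+ \tensormin B^*_+)^*$, and symmetrically for the other factor.

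The only mild subtlety — and the closest thing to an obstacle — is checking that the two halves of the intersection genuinely coincide under the hypotheses, i.e. that $(A^*_+ \tensormin B^*_+)^* = (A_+^* \tensormin B^*_+)^*$ as written; but these are literally the same expression once one substitutes $E^A_+ = A^*_+$ into the first factor and $E^B_+ = B^*_+$ into the second, so the intersection $X \cap X = X$ is immediate. One should also remark that the second line of \eqref{eq:gentensormax} gives the same conclusion directly, since $A^*_+ \tensormin B^*_+ \cup A^*_+ \tensormin B^*_+ = A^*_+ \tensormin B^*_+$. I would close by stating that therefore $A_+ \gtensormax B_+ = (A^*_+ \tensormin B^*_+)^* = A_+ \tensormax B_+$, completing the proof; no geometric or analytic input beyond the definitions and the elementary fact that dualizing a set equals dualizing the convex cone it generates is required.
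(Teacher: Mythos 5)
Your proposal is correct and follows exactly the paper's own route: substitute the hypotheses into the definition \eqref{eq:gentensormax}, observe that both factors of the intersection become $\left(A^*_+ \tensormin B^*_+\right)^*$, and identify this with $A_+ \tensormax B_+$ via \eqref{eq:tradtensormax2}. The extra remarks on duality of generated cones are harmless elaboration of what the paper treats as immediate.
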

\begin{proof}
Under the assumptions, Eq.~\ref{eq:gentensormax} becomes
\begin{align*}
A_+ \gtensormax B_+ &= \left(A^*_+ \tensormin B_+^*\right)^*\\
& = A_+ \tensormax B_+
\end{align*}
using the definition of the maximal tensor product in \eqref{eq:tradtensormax}.
\end{proof}
Hence our construction is indeed a generalization of the existing definition of the maximal tensor product.
It determines all joint states consistent with general subsystems regardless whether the no-restriction hypothesis holds or not.
I.e. all joint states with valid conditional states are included, as shown in the following theorem.

\begin{thm}\label{thm:maxtensorprod}
Let $\omega^{AB}\in V^{AB}$.
Then $\omega^{AB}\in A_{+}\gtensormax B_{+}$ iff
$\omega^{AB}$ has well-defined conditional states:
\[
\tilde{\omega}^A_{e^B}\in \Omega^{A} \; \textrm{and} \; \tilde{\omega}^B_{e^A}\in \Omega^{B}
\]
for all $e^A\in E^A$ and $e^B\in E^B$. 
\end{thm}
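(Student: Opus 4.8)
The plan is to prove the two directions separately, using the characterization of the generalized maximal tensor product as a dual cone intersection in Eq.~\eqref{eq:gentensormax}, together with the interpretation of joint states as positive linear maps developed in Section~\ref{sec:linmaps}.

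For the forward direction, suppose $\omega^{AB}\in A_{+}\gtensormax B_{+}=\left(E^A_+ \tensormin B_+^*\right)^* \cap \left(A_+^* \tensormin E^B_+\right)^*$. Fix an effect $e^B\in E^B$ with $u^B(\omega^B_{e^B})>0$; I want to show the normalized conditional state $\tilde\omega^A_{e^B}=\omega^{AB}\cdot e^B / u^B(\omega^B_{e^B})$ lies in $\Omega^A$. Recall from \eqref{eq:primalcone} that $\Omega^A$ consists exactly of the normalized elements of $A_+=(A^*_+)^*$, so it suffices to check that $\omega^A_{e^B}:=\omega^{AB}\cdot e^B$ is in $A_+$, i.e. that $f^A(\omega^A_{e^B})\geq 0$ for every $f^A\in A^*_+$. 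Using the matrix identity $f^A(\omega^A_{e^B})=(f^A)^T\cdot\omega^{AB}\cdot e^B=(f^A\otimes e^B)[\omega^{AB}]$, and noting that $f^A\otimes e^B$ is a generator of $A^*_+\tensormin E^B_+$, this follows precisely from $\omega^{AB}\in\left(A_+^* \tensormin E^B_+\right)^*$. The normalization of $\tilde\omega^A_{e^B}$ is automatic by construction, so $\tilde\omega^A_{e^B}\in\Omega^A$. The symmetric argument using the first factor $\left(E^A_+ \tensormin B_+^*\right)^*$ gives $\tilde\omega^B_{e^A}\in\Omega^B$ for all $e^A\in E^A$. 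One should also remark that the marginal states are the special cases $e^B=u^B$, $e^A=u^A$ (which are in $E^B$, $E^A$ by consistency condition i)), so these are covered too, and in particular $u^{AB}(\omega^{AB})=u^A(\omega^A_{u^B})=1$ shows $\omega^{AB}$ is normalized.

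For the converse, suppose $\omega^{AB}$ has well-defined conditional states in $\Omega^A$ and $\Omega^B$ for all $e^A\in E^A$, $e^B\in E^B$. I must show $\omega^{AB}$ is in both dual cones appearing in \eqref{eq:gentensormax}. Take a generator $f^A\otimes e^B$ of $A^*_+\tensormin E^B_+$, with $f^A\in A^*_+$ and $e^B\in E^B$. Then $(f^A\otimes e^B)[\omega^{AB}]=f^A(\omega^A_{e^B})$. By hypothesis $\tilde\omega^A_{e^B}\in\Omega^A\subseteq A_+$ (when $u^B(\omega^A_{e^B})>0$), hence $\omega^A_{e^B}=u^B(\omega^A_{e^B})\,\tilde\omega^A_{e^B}\in A_+$; the degenerate case $u^B(\omega^A_{e^B})=0$ forces $\omega^A_{e^B}=\emptyset$ (an unnormalizable conditional state equals the zero state) which is also in $A_+$. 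Since $f^A\in A^*_+$ by \eqref{eq:dualcone} pairs non-negatively with every element of $A_+$, we get $(f^A\otimes e^B)[\omega^{AB}]\geq 0$. As the dual cone of a set equals the dual cone of the cone it generates, and the generalized inequality on all generators $f^A\otimes e^B$ extends by linearity and limits to all of $A^*_+\tensormin E^B_+$, this shows $\omega^{AB}\in\left(A_+^* \tensormin E^B_+\right)^*$. Exchanging the roles of $A$ and $B$ yields membership in the other factor, so $\omega^{AB}\in A_{+}\gtensormax B_{+}$.

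I expect the main obstacle to be the careful handling of the degenerate (unnormalizable) conditional states and the precise bookkeeping that relates "valid conditional states lie in $\Omega^A$" to "the unnormalized conditional state lies in the primal cone $A_+$"; one needs the fact, implicit in the steering discussion of Section~\ref{sec:linmaps}, that a conditional state $\omega^A_{e^B}$ is always a non-negative multiple of a normalized state (its weight being the outcome probability $u^B(\omega^A_{e^B})=p(e^B|\omega^{AB})\geq 0$), together with the identification $A_+=\{\lambda\omega\mid\omega\in\Omega^A,\lambda\geq 0\}$ from \eqref{eq:primalcone}. A secondary point requiring a line of justification is that it suffices to test the dual-cone conditions on the \emph{product} generators $f^A\otimes e^B$ rather than on all of $A^*_+\tensormin E^B_+$, which is immediate since the minimal tensor product is by Definition~\ref{def:tradtensormin} the convex cone generated by such products and dualization is insensitive to passing to the generated cone.
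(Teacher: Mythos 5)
Your proof is correct, and it takes a more elementary, self-contained route than the paper's. Both arguments reduce the theorem to the same two equivalences---membership in $\left(A_+^*\tensormin E^B_+\right)^*$ governs the conditional states $\omega^A_{e^B}$, and membership in $\left(E^A_+\tensormin B_+^*\right)^*$ governs the $\omega^B_{e^A}$---but the paper establishes them by introducing \emph{virtual} unrestricted systems $\mathcal{A}$, $\mathcal{B}$ (with state cone $(E^A_+)^*$ on one side and the original $B_+$ on the other), identifying each factor of \eqref{eq:gentensormax} with a traditional maximal tensor product $\mathcal{A}_+\tensormax\mathcal{B}_+$, and then invoking the previously cited result that the traditional maximal tensor product is exactly the set of positive linear maps between the unrestricted cones. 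You bypass the virtual systems entirely and verify both implications directly from the definition of the dual cone, with the identity $(f^A\otimes e^B)[\omega^{AB}]=f^A\bigl(\omega^A_{e^B}\bigr)$ and $A_+=(A_+^*)^*$ doing all the work. Your route buys a proof with no external dependency and an explicit treatment of two points the paper glosses over: the bookkeeping relating $\omega^A_{e^B}\in A_+$ to $\tilde\omega^A_{e^B}\in\Omega^A$ (including the degenerate case $p(e^B|\omega^{AB})=0$, where one must read the hypothesis as ``the unnormalized conditional state is a non-negative multiple of a normalized state''), and the observation that the dual-cone condition need only be tested on product generators. The paper's route buys the conceptual picture that each factor of the generalized tensor product \emph{is} an ordinary maximal tensor product of suitably chosen unrestricted systems, which it reuses in the discussion of why the traditional construction fails. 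One further remark in your favour: you attached each factor to the correct direction of steering; in the paper the left-hand sides of the displayed conditions \eqref{eq:thmcond1} and \eqref{eq:thmcond2} are transposed relative to the surrounding prose, and your version agrees with the prose and with the correct mathematics.
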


\begin{proof}

We shall show that 
\begin{align}\label{eq:thmcond1}
{\omega}^A_{e^B}\in A_+ \textrm{  iff  } \omega^{AB}\in\left( E^A_+ \tensormin B_+^*\right)^*
\end{align}
and that 
\begin{align}\label{eq:thmcond2}
{\omega}^B_{e^A}\in B_+ \textrm{  iff  } \omega^{AB}\in\left( A_+^* \tensormin  E^B_+\right)^*.
\end{align}
Since $A_+ \gtensormax B_+$ is defined as the intersection of sets of linear maps $\left( E^A_+ \tensormin B_+^*\right)^*$ and $\left( A_+^* \tensormin  E^B_+\right)^*$, this will establish the thesis.

First we show the $A\rightarrow B$ direction, i.e.~\eqref{eq:thmcond1}, which is the statement that $\left(E^A_+ \tensormin B_+^*\right)^*$ is the set of all and only those joint states $\omega^{AB}$ such that each $\omega^{AB}$ defines a map from effects $e^A \in E^A_+$ on system $A$ to valid unnormalized states $\omega^B \in B_+$.

Recall that for non-restricted systems the traditional maximal tensor product is already known to give all positive linear maps from the effect cone of one part of the system to the state cone of the other part for both directions.
We now show that $\left( E^A_+ \tensormin B_+^*\right)^*$ can be interpreted as the traditional maximal tensor product $\mathcal{A}_+ \tensormax \mathcal{B}_+$ of two virtual systems $\mathcal{A}$ and $\mathcal{B}$ that obey the no-restriction hypothesis.
$\mathcal{A}$ is the virtual system that has already been introduced in \eqref{eq:virtualcones}.
It has an extended virtual state cone $\mathcal{A}_+$, since $(E^A_+)^*\subset \mathcal{A}_+$.
However, the actual effect cone is kept, as it coincides with the dual cone $E^A_+ = \mathcal{A}^*_+$ characterizing unnormalized effects of the non-restricted systems. 
The opposite situation applies to $\mathcal{B}$.
Here, the effect set $E_+^B$ is extended to the dual cone $B^*_+$, so that $E_+^B\subset B^*_+$, where $B^*_+$ representing the full set of potential unnormalized effects.
However, the original state cone $B_+ = \mathcal{B}_+$ is kept.
With these conventions 
\[
\left( E^A_+ \tensormin B_+^*\right)^* = \mathcal{A}_+ \tensormax \mathcal{B}_+ 
\]
follows directly from the definition of the traditional tensor product in \eqref{eq:tradtensormax2}. 
Hence for the $A \to B$ direction, this means that $\left( E^A_+ \tensormin B_+^*\right)^*$ contains all positive linear maps from the restricted effects in $A$ to allowed states in $B$.
That is, $\omega^{AB}\in\left( E^A_+ \tensormin  B^*_+\right)^*$ is a sufficient and necessary condition for ${\omega}^B_{e^A}\in B_+$ which proves \eqref{eq:thmcond1}.
However, for the traditional maximal tensor product the same joint states also coincide with the positive maps in the other direction $\mathcal{B} \to \mathcal{A}$, potentially including invalid mappings.

By swapping the roles of $A$ and $B$ in the above argument, we similarly obtain that the set $\left( A_+^* \tensormin E^B_+ \right)^*$ includes all linear maps that are consistent for the $B \to A$ direction,
but also those which lead to inconsistencies in $A\to B$ opposite direction.
Hence we obtain \eqref{eq:thmcond2}.
\end{proof}

Theorem \ref{thm:maxtensorprod} shows that the generalized maximal tensor product includes only those joint states that are consistent in both directions (i.e. the intersection of the sets $\left( E^A_+ \tensormin B_+^*\right)^*$ and $\left( A_+^* \tensormin E^B_+ \right)^*$).
Note that the if $\omega^{AB}$ has well-defined conditional states, then in particular it is  is locally positive:
\[
\left(e^A \otimes e^B\right) \left[\omega^{AB}\right] \geq 0
\]
which provides a useful necessary condition that joint states must satisfy.

In the traditional GPT framework the choices of tensor products for states and effects are not independent, as the no-restriction hypothesis does not only apply to single systems, but to the joint system as well.
Having the minimal tensor product for joint states (effects) does in fact constitute the maximal tensor product for the set of joint effects (states).
This restriction seems inappropriate given that arbitrary single systems can actually be emulated by classical systems with constrained measurements \cite{Holevo82}, whereas entanglement is a strictly non-classical feature.

In our modified framework that is also valid for systems violating the no-restriction hypothesis, this is no longer the case. 
We have seen that we can generalize the maximal tensor product, but nevertheless we are not forced to use this for states when we choose the minimum tensor product for effects and the other way round.

\subsection{Examples of joint systems}
\label{sec:jointstatesexamples}

To give some specific examples for the generalized maximal tensor product, we have calculated it for the toy theories introduced in sections \ref{sec:noisyboxworld} and \ref{sec:polymodel} using the double description method \cite{doubledescription}.

\subsubsection{Noisy boxworld}

In the original unrestricted version of boxworld joint systems are given by the maximal tensor product, including the $16$ extremal product states and $8$ pure entangled joint states $\Phi^{AB} = \frac{1}{2} (\omega_1 \otimes \omega_2 - \omega_2 \otimes \omega_2 + \omega_2 \otimes \omega_3 + \omega_3 \otimes \omega_1)$ and respectively the states transformed by local symmetries.

These entangled extremals can be interpreted as a maximally entangled state of two such systems, as they form a isomorphic map and have totally mixed reduced states.
They correspond to a rotation of $\frac{\pi}{4}$ and the local symmetries of the state spaces.

This theory has become very popular as it shows nonlocal correlations beyond those possible in quantum theory, when choosing between two possible binary measurements at each side of the bipartite systems.
Let us denote the two measurements $\{M^A_x\}$ and $\{M^B _y\}$ for each of the systems $A$ and $B$ respectively: we index the measurements at each system with $x,y\in\{0,1\}$.
Each measurement has binary outcomes, labelled with $a,b\in\{0,1\}$ for systems $A$ and $B$ respectively.
For example, the $x=0$ measurement on system $A$ consists of a pair of effects $M^A_x=\{e_0,e_1\}$ satisfying $e_0+e_1=u$; similarly for the $x=1$ measurement on system $A$, and $y\in\{0,1\}$ measurements on system $B$.
This leads to a bipartite conditional probability distribution 
\begin{align}\label{eq:bipcondprob}
P(a,b|x,y):=(e_a\otimes e_b)[\omega^{AB}]
\end{align}
We define the correlation 
\[
C_{xy}:=P(a=b|x,y)-P(a \neq b|x,y).
\]
To introduce the Clauser-Horne-Shimony-Holt (CHSH) inequality for demonstrating nonlocality, we introduce the parameter
\[
S:=|C_{00}+C_{01}+C_{10}-C_{11}|,
\]
For classical systems it is upper bounded by the CHSH inequality \cite{CHSH}
\[
\label{eq:CHSHineq}
S^{\mathrm{C}} \leq 2,
\]
whereas for quantum theory it must satisfy $S^\mathrm{Q} \leq 2 \, \sqrt{2}$ \cite{Tsirelson}.
However, local measurements on the maximally entangled state $\Phi$ in boxworld can produce correlations which reach the algebraic maximum $S^\mathrm{max}=4$, i.e. the theory allows the post-quantum correlations known as PR boxes \cite{PRbox}. 

For the noisy version of boxworld introduced in section \ref{sec:noisyboxworld} there is still a notion of a maximally entangled state in the generalized maximal tensor product, namely
\begin{align}
 \Phi^\lambda &= \omega_{\mathrm{ent},1}^{AB} = \mathrm{diag}(\frac{1}{\lambda}, \frac{1}{\lambda}, 1) \cdot \Phi^{1},
\end{align}
i.e. the original maximally entangled state $\Phi^{1}$ combined with a mapping of the effects on one side of the system to the original unrestricted set.
Note that this map does not undo the restriction of effects completely.
The reversion only happens to occur in this particular case when mapping to states of the other part. 
On the other part, however, only restricted effects can be applied to.
Consequently, the correlations possible with restricted systems will be different to those possible in unrestricted systems.

Furthermore, constructing the generalized maximal tensor product it turns out there are 4 different classes of new pure joint states that are entangled but not maximally entangled.
These are representatives of each class
\begin{align}
 \nonumber
 \omega_{\mathrm{ent},2}^{AB} &= -\alpha \, \omega_2 \otimes \omega_2 + \beta \, \omega_2 \otimes \omega_4 + \beta \, \omega_4 \otimes \omega_2 -\alpha \,  \omega_4 \otimes \omega_4\\\nonumber 
 \omega_{\mathrm{ent},3}^{AB} &= -\alpha \, \omega_2 \otimes \omega_2 + \beta \, \omega_2 \otimes \omega_3 + \beta \, \omega_4 \otimes \omega_2 -\alpha \,  \omega_4 \otimes \omega_3\\\nonumber
 \omega_{\mathrm{ent},4}^{AB} &= -\alpha \, \omega_2 \otimes \omega_2 + \beta \, \omega_2 \otimes \omega_4 + \beta \, \omega_3 \otimes \omega_2 -\alpha \,  \omega_3 \otimes \omega_4\\\nonumber
 \omega_{\mathrm{ent},5}^{AB} &= -\alpha \, \omega_3 \otimes \omega_2 + \beta \, \omega_4 \otimes \omega_1 + \beta \, \omega_4 \otimes \omega_2 -\alpha \,  \omega_4 \otimes \omega_3\\
 \text{with } \alpha &= \frac{1-\lambda}{4 \, \lambda}, \beta = \frac{1+\lambda}{4 \, \lambda},
\end{align} 
where the other elements of the class only differ by the local symmetries.

In conclusion the generalized maximal tensor product is spanned by $96$ pure states.
Namely, it consists of $16$ local pure states, $8$ pure entangled states of class $\omega_{\mathrm{ent},1}^{AB}$, $8$ of class $\omega_{\mathrm{ent},2}^{AB}$, $16$ of class $\omega_{\mathrm{ent},3}^{AB}$, $16$ of class $\omega_{\mathrm{ent},4}^{AB}$ and $32$ states of class $\omega_{\mathrm{ent},5}^{AB}$.

Considering local measurements on one instance of any of the nonlocal extremal states the maximal CHSH violation $S^\lambda$ as a function of the parameter $\lambda$ of the restricted model can be shown to be $4 \, \lambda^2$.
Note that this bound is only guaranteed for the correlations that occur from direct measurements.
However, it is known that wiring the measurements on multiple joint states via classical post-processing, might give rise to a distillation of correlations beyond for some values of $\lambda$ \cite{closedunderwirings}.

\subsubsection{Self-dualized polygons}

Interestingly, not only boxworld but all bipartite polygon systems allow a joint state with features known from the maximally entangled state of ordinary quantum theory.
Namely, the linear maps corresponding to these states are given by isomorphisms of the dual and primal cones with maximally mixed reduced states.
The $2 \, n$ different maximally entangled states correspond to the elements of the dihedral group.
For even $n$, the maximally entangled states include an additional rotation of $\pi / n$ mapping the dual cone of one part to the primal cone of the other part.
It was shown that non-local correlations based on two binary local measurements on the maximally entangled states at each side show correlations strictly weaker than quantum correlations for the odd case, whereas the unrestricted even case shows correlations as strong as those of quantum theory or stronger \cite{polypaper}.

Replacing the original polygon systems with even $n$ by their self-dualized versions, the maximally entangled states lose the additional rotation as the new effect cone and the state cone coincide.
Note, that the self-dualized single systems become subtheories of the theory given in the limit $n \to \infty$, i.e. the quantum case, as both states and effects form strict subsets.
Thus, the correlations on the maximally entangled state form a strict subset of those in quantum theory, in contrast to the unrestricted case which allows post-quantum correlations.
Even though the restricted polygons are not genuine strongly self-dual but only self-dualized, this is consistent with the conjecture in \cite{polypaper}, that strong self-duality limits correlations.

For self-dualized boxworld the generalized maximal tensor product is given by the $16$ local pure states, the $8$ states $\omega_{\mathrm{ent},1}^{AB}$ representing the identity and symmetry mappings as well as a class of $64$ pure entangled states $\omega_{\mathrm{ent},2}^{AB} = 1/4 (-\omega_1 \otimes \omega_1 + \omega_1 \otimes \omega_3 + 2 \, \omega_2 \otimes \omega_4 + \omega_3 \otimes \omega_1 - \omega_3 \otimes \omega_3 + 2 \, \omega_4 \otimes \omega_2)$.

Unfortunately, using the double description method, we were not able to characterize all extremals of the generalized maximal tensor product for polygon systems with a higher number of vertices.

\subsubsection{Spekkens's toy theory}\label{sec:spekkensjoint}

The Spekkens theory that we introduced earlier is a local theory, meaning that (in the probabilistic version) it cannot violate any Bell inequalities.
However, as discussed in \cite{Spekkens}, the Spekkens theory has entangled states.
This raises the question of why the Spekkens theory does not exhibit bipartite nonlocality. 
In contrast, a classical theory, i.e. a simplex, is local but it does not have entangled states.
One could then ask, given that the Spekkens theory has entangled states, but is local, what must be \em added \em to the definition of the theory to make it nonlocal?

In our framework, the answer to this question can be clearly understood in terms of the geometry of the state space.
First, recall that the state space $\Omega^{A}$ of a single system in the Spekkens theory is an octahedron, and the effect space $E^A$ is identically the same, i.e. $E^A$ not the full dual space.
Consider a pair of single systems $A$ and $B$ in the Spekkens theory.
Since the effect space $E^A$ is not the full dual space $A^*_+$, we must use the generalized tensor product $\Omega^{AB}=A_+\gtensormax B_+$ to define the bipartite states.
Then consider the following bipartite state:
\begin{align}\label{eq:entstate}
 \omega^{AB} &= \begin{pmatrix}
  0 & 0 & 0 & 0 \\
 0 & -\frac{1}{2} & -\frac{1}{2} &  0 \\
  0 & -\frac{1}{2} & \frac{1}{2} & 0 \\
    0 & 0 & 0 & 1
 \end{pmatrix}
\end{align} 
It is straightforward to verify that $\omega^{AB}$ leads to well-defined conditional states for system $B$ for all effects $e^A\in E^A$, i.e.:
\[
\tilde{\omega}^B_{e^A}\in\Omega^{B}
\]
and correspondingly for conditional states for system $A$ when using effects on system $B$.
In particular, it is also easily checked that $(e^A\otimes e^B)[\omega^{AB}]\geq 0$ for any pair of effects $e^A$ and $e^B$.
Hence by Theorem \ref{thm:maxtensorprod}, this shows that $\omega^{AB}$ is in the generalized tensor product $A_+\gtensormax B_+$ for the Spekkens theory.

Now, since the Spekkens theory is local, the CHSH inequality \eqref{eq:CHSHineq} is satisfied for any choice of measurements $M_x$ and $M_y$ on the state $\omega^{AB}$, or any other bipartite state.
However, let us consider the unrestricted effect space $A^*_+$ from which the restricted space $E^A$ for the Spekkens theory was derived.
The unrestricted effect space of the octahedron is the cube.
We can represent the normalised extremal effects as the vertices of a cube:
\[
e_i = 
\frac{1}{2}
 \begin{pmatrix}
 \pm1 \\
 \pm1\\
 \pm1\\
 1 
 \end{pmatrix}
\]
Now, suppose that we use the cube to be the effect space for the octahedron, i.e.~we use the full dual space.
It is easily shown that the state $\omega^{AB}$ defined in Eq.~\ref{eq:entstate} is again in the generalized maximal tensor product $A_+\gtensormax B_+$.
However, we can now provide measurements which violate the CHSH inequality. 
In particular, consider two measurements for Alice given by $M^A_0=\{e_0,u-e_0\}$ and  $M^A_1=\{e_1,u-e_1\}$ where:
\[
e_0=
\frac{1}{2}
\begin{pmatrix}
1 \\
1 \\
-1 \\
1 
\end{pmatrix}
,
e_1=
\frac{1}{2}
\begin{pmatrix}
-1 \\
-1 \\
-1 \\
1 
\end{pmatrix}
\]
and two measurements for Bob given by $M^B_0=\{e_0,u-e_0\}$ and  $M^B_1=\{e_2,u-e_2\}$, where:
\[
e_0=
\frac{1}{2}
\begin{pmatrix}
1 \\
1 \\
-1 \\
1 
\end{pmatrix}
,
e_2=
\frac{1}{2}
\begin{pmatrix}
-1 \\
-1 \\
1 \\
1 
\end{pmatrix}
\]
By using these choices of measurements in Eq.~\ref{eq:bipcondprob} and the following equations, we obtain the value of the CHSH parameter: this is $S=4$.
This is the value attained by PR boxes, and hence using the full effect space essentially yields the same nonlocality as boxworld. 

We therefore see that the Spekkens theory can be embedded into a nonlocal theory by embedding the effect space of single system into the full dual cone.
Moreover, we see completing the Spekkens theory in this way yields boxworld. 
This provides a new understanding of why the Spekkens theory is local:  the measurements are too restricted.

\section{Conclusions}

We have extended the framework of generalized probabilistic theories.
Given an arbitrary state space the traditional framework determines the possible measurement outcomes as corresponding to the complete set of probability valued linear functionals on states.
In contrast to the traditional framework, our generalization allows the set of states and and the set of effects to be defined separately.
As a result the upper bound for the set of joint states, known as the maximal tensor product, is no longer valid in its traditional form, but has to be replaced by a generalized version.

As an application for restricted models, we provided a self-dualization procedure that alters any theory by restricting the set of effects, such that states and the restricted effects are similarly related as states and unrestricted effects in strongly self-dual systems.
We introduce specific examples for which the self-dualization does not only give a formal resemblance but reproduces a phenomenon called \emph{bit symmetry} shown to only hold for strongly self-dual systems in the traditional framework \cite{Mueller12}.
Furthermore, these self-dualized models show quantum correlations, whereas the original models have correlations that are stronger than quantum correlations. 
In particular, the correlations of boxworld---a theory known to allow correlations only restricted by the no-signalling principle---has classical correlations if self-dualized, even though the generalized maximal tensor product includes maximally entangled states. 
We showed how the Spekkens theory is related to this model, since it is also self-dual and violates the no-restriction hypothesis: but were it to satisfy this principle, by taking the full dual cone, it would produce nonlocal correlations.

As another application for restricted models, we show that restrictions can be used to alter theories, such that their measurements are inherently noisy.
This is different to the unrestricted theories, since in our noisy theories it holds that for pure states there is no non-trivial extremal effect occurring with certainty. 
We derive the maximal CHSH violation \cite{CHSH} of a noisy version of boxworld as a function of a noise parameter.

The modified framework is therefore suitable for examining new situations that could not be addressed using the traditional framework.
In particular the self-dualization procedure might be useful for the study of strong self-duality that has recently received much interest \cite{Barnum08,polypaper,Mueller12}.

\begin{acknowledgements}
We thank Jonathan Barrett, Christian Gogolin and Haye Hinrichsen for insightful discussions. PJ is supported by the German Research Foundation (DFG). RL is supported by the Templeton Foundation.
\end{acknowledgements}

\bibliography{refs}   
\bibliographystyle{apsrev}       

\end{document}